 \newtheorem{theorem}{Theorem}
 \newtheorem{lemma}{Lemma}
 \newtheorem{corollary}{Corollary}
 \theoremstyle{definition}
 \newtheorem{definition}{Definition}
 \theoremstyle{remark}
 \newtheorem{remark}{Remark}
 \theoremstyle{example}
 \newtheorem*{rep@theorem}{\rep@title}
 \newcommand{\newreptheorem}[2]
 {\newenvironment{rep#1}[1]
 	{\def\rep@title{#2 \ref{##1}} \begin{rep@theorem}}%
 		{\end{rep@theorem}}}
 \newcommand{\one}[1]{{\mathbbm{1}}_{{#1}}}
 \newcommand{\PP}[1]{\textnormal{Pr}\!\left\{{#1}\right\}} 
 \newcommand{\EE}[1]{\mathbb{E}\left[{#1}\right]} 
 \newcommand{\EEst}[2]{\mathbb{E}\left[{#1}\ \middle| \ {#2}\right]} 
 \newcommand{\PPst}[2]{\text{Pr}\!\left\{{#1}\ \middle| \ {#2}\right\}} 
 \def\R{\mathbb{R}}
 \newcommand{\ignore}[1]{}
 \let\emptyset\varnothing
 \newcommand{\nulls}{\mathcal{H}_0}
 \newcommand{\FCR}{\textnormal{FCR}}
 \newcommand{\mFCR}{\textnormal{mFCR}}
 \newcommand{\FCP}{\textnormal{FCP}}
 \newcommand{\FDR}{\textnormal{FDR}}
 \newcommand{\mFDR}{\textnormal{mFDR}}
 \newcommand{\pFCR}{\textnormal{pFCR}}
\newcommand{\FSR}{\textnormal{FSR}}
\newcommand{\mFSR}{\textnormal{mFSR}}
 \newcommand{\thedate}{\today}
 \newcommand{\theauthor}{}
 \newcommand{\thetitle}{Online Control of the
 False Coverage Rate and False Sign Rate}
 \date{}
 \author{\theauthor}
 \title{\thetitle}
 \newcommand{\fcphat}{\widehat{\textnormal{FCP}}}
 \def\S{\mathcal{S}}
 \def\R{\mathbb{R}}
 \def\F{\mathcal{F}}
 \def\N{\mathbb{N}}
 \def\X{\mathcal{X}}
 \def\I{\mathcal{I}}
 \newcommand{\dotfrac}[2]{
 	\mathchoice
 	{\ooalign{$\genfrac{}{}{0pt}{0}{#1}{#2}$\cr\leavevmode\cleaders\hb@xt@ .22em{\hss $\displaystyle\cdot$\hss}\hfill\kern\z@\cr}}
 	{\ooalign{$\genfrac{}{}{0pt}{1}{#1}{#2}$\cr\leavevmode\cleaders\hb@xt@ .22em{\hss $\textstyle\cdot$\hss}\hfill\kern\z@\cr}}
 	{\ooalign{$\genfrac{}{}{0pt}{2}{#1}{#2}$\cr\leavevmode\cleaders\hb@xt@ .22em{\hss $\scriptstyle\cdot$\hss}\hfill\kern\z@\cr}}
 	{\ooalign{$\genfrac{}{}{0pt}{3}{#1}{#2}$\cr\leavevmode\cleaders\hb@xt@ .22em{\hss $\scriptscriptstyle\cdot$\hss}\hfill\kern\z@\cr}}
 }
\begin{document}

\author{
Asaf Weinstein, Aaditya Ramdas\thanks{The authors contributed equally to this work.}\\
Department of Statistics and Data Science\\
Carnegie Mellon University\\
\texttt{asafw, aramdas@cmu.edu}
}

\maketitle
\begin{abstract}
The false coverage rate (FCR) is the expected ratio of number of constructed confidence intervals (CIs) that fail to cover their respective parameters to the total number of constructed CIs. Procedures for FCR control exist in the offline setting, but none so far have been designed with the online setting in mind. In the online setting, there is an infinite sequence of fixed unknown parameters $\theta_t$ ordered by time. At each step, we see independent data that is informative about $\theta_t$, and must immediately make a decision whether to report a CI for $\theta_t$ or not. If $\theta_t$ is selected for coverage, the task is to determine how to construct a CI for $\theta_t$ such that $\FCR \leq \alpha$ for any $T\in \N$. A straightforward solution is to construct at each step a $(1-\alpha)$ level conditional CI. 

In this paper, we present a novel solution to the problem inspired by online false discovery rate (FDR) algorithms, which only requires the statistician to be able to construct a marginal CI at any given level. Apart from the fact that marginal CIs are usually simpler to construct than conditional ones, the marginal procedure has an important qualitative advantage over the conditional solution, namely, it allows selection to be determined by the candidate CI itself. We take advantage of this to offer solutions to some online problems which have not been addressed before. For example, we show that our general CI procedure can be used to devise online sign-classification procedures that control the false sign rate (FSR). In terms of power and length of the constructed CIs, we demonstrate that the two approaches have complementary strengths and weaknesses using simulations. Last, all of our methodology applies equally well to online FCR control for prediction intervals, having particular implications for assumption-free selective conformal inference.
\end{abstract}

\section{Introduction}\label{sec:intro}
While statisticians are trained to be aware of multiple testing issues, temporal multiplicity is often easy to miss. Let us examine the following simplified situation alluded to in the abstract. Consider a team of statisticians at a pharmaceutical company who test a new drug every week of the year. In week $t$, a new drug is under consideration, and  to assess its treatment effect $\theta_t$, the team conducts a new randomized clinical trial with new participants. Suppose that the data, such as the normalized empirical difference in means between the treatment and control groups, can be summarized by the observation $X_t \sim N(\theta_t, 1)$, independent of all the previous $X_i$. 

Now consider the following selection rule: if $X_t < 3$, then the statisticians simply ignore drug $t$, and if $X_t > 3$, then the team reports the two-sided $99\%$ marginal CI for $\theta_t$ to the management (who may then decide to run a much larger second phase clinical trial since the CI does not contain 0). This may initially seem like an innocuous situation: each drug is different and has a different treatment effect $\theta_t$,  the data $X_t$ is always fresh and independent, the decision for whether or not to construct the CI for $\theta_t$ is dependent only on $X_t$ and independent of all other $X_i$, and so is the interval if constructed. 

Nonetheless, the combination of multiplicity and selection is a cause for concern already in the offline setting, as was insightfully pointed out by \citet{benjamini2005false}. 
In the online case, when there is an infinite sequence of parameters, it is even easier to construct an example where ignoring selection has undesirable consequences. 
Indeed, consider the special case where $\theta_t=0$ for all $t$, in other words, every tested drug is equivalent to a placebo. 
In this situation, every single CI that is reported to the management is incorrect, since it does not contain zero. 
Because a selection will eventually occur, among constructed CIs the proportion of non-covering CIs---this is later formally defined as the false coverage proportion, FCP---will equal one from this point on. 
Thus, the FCR---expectation of FDP---is not controlled. 
Of course, the second phase of the trial will rectify this error, but at a huge cost of time and money, and loss of faith in the team of statisticians.

One natural solution for this is provided by conditional post-selection inference: instead of a $99\%$ marginal CI, we may construct a conditional $99\%$ interval, where we condition on the event that $X_t > 3$, leading to inference based on a truncated gaussian likelihood in the above setting. 
Confidence intervals based on a truncated normal observation were proposed by \citet{zhong2008bias} and \citet{weinstein2013selection} to counteract the selection effect when providing inference after hypothesis testing. 
While these works consider the batch (offline) setting, in our simple example constructing such conditional CIs (as well as the selection rule) is a legitimate {\it online} CI procedure. 
Furthermore, this controls the FCR---
in fact, as will be discussed in Section~\ref{sec:conditional} and demonstrated in our simulations, constructing conditional intervals provides unnecessarily strong guarantees, that come at a price. 


\smallskip
In this paper, we will propose a new approach for online FCR control that is very different from the aforementioned conditional approach. 
Informally, in order to achieve FCR control at level $\alpha$, instead of constructing $(1-\alpha)$ conditional CIs, we construct $(1-\alpha_i)$ marginal CIs for some $\alpha_i < \alpha$.  
The algorithm to set the $\alpha_i$s is inspired by recent advances in the online false discovery rate (FDR) control literature, specifically recent work by the first author \citep{RYWJ17}. 
The new CI procedure works in much more generality than the simple example described above, that is when $\theta_i$ are multi-dimensional, the data is not necessarily gaussian, and so on---cases in which constructing a conditional CI may be substantially harder if at all possible. 

Even more importantly, by constructing marginal instead of conditional CIs, we leave open the possibility to use as a criterion for selection the candidate CI itself. 
For example, the rule may entail constructing the candidate marginal CI only if it does not include values of opposite signs. 
Thus, returning to the motivating example, this allows the team of statisticians to ensure that each reported CI is conclusive about the {\it direction} of the treatment effect, while the FCR is controlled. 
With such situations in mind, we instantiate our marginal CI procedure to propose a confidence interval-driven procedure, that constructs sign-determining CIs and can be seen as an online adaptation of the ideas of \citet{weinstein2014selective}. 
Every such sign-determining CI procedure corresponds to an online sign-classification procedure that controls the false sign rate (FSR). 
As a special case we show that for some recently proposed online testing procedures, supplementing rejections based on two-sided $p$-values with directional decisions suffices to control the FSR. 




\smallskip
The rest of this paper is organized as follows. 
Section \ref{sec:setup} sets up the problem formally and introduces necessary notation. 
In Section \ref{sec:conditional} we discuss a conditional solution to the online FCR problem. 
A new online procedure that adjusts marginal confidence intervals, is presented in Section \ref{sec:lordci}. 
In Section \ref{sec:localization} we show how our marginal CI procedure can be used to solve a general online localization problem, and study the special case of the online sign-classification problem. 
Simulation results for comparing the marginal approach and the conditional approach are reported in Section \ref{sec:experiments}. We end with a brief discussion in Section \ref{sec:discussion}, where we mention how all of our results also hold for prediction intervals for unseen responses, with further details furnished in Appendix~\ref{sec:conformal}.

\section{Problem Setup}\label{sec:setup}

Let $\theta_1, \theta_2, \dots$ be a fixed sequence of fixed unknown parameters, where the domain $\Theta_i$ of $\theta_i$ is arbitrary, but common examples may include $\R$ or $\R^d$.
Let $2^{\Theta_i}$ denote the set of all measurable subsets of $\Theta_i$, in other words it is any acceptable confidence set for $\theta_i$.  
In our setup, at each time step $i$, we observe an independent observation (or summary statistic) $X_i \in \X_i$, where the distribution of $X_i$ depends on $\theta_i$ (and possibly other parameters). 
For example when $\Theta_i=\R$, we may have $X_i \sim N(\theta_i,1)$. 
Let $\S_i: \X_i \to \{0,1\}$ denote the selection rule that indicates whether or not the user wishes to report a confidence set for $\theta_i$. 
Explicitly, letting $S_i := \S_i(X_i)$ be the indicator for selection, where $S_i=1$  means that the user will report a confidence set for $\theta_i$. 
Let the filtration formed by the sequence of selection decisions be denoted by
$$
\F^i=\sigma(S_1,S_2,\dots,S_i).
$$
Next, let $\I_i: \X_i \times [0,1] \to 2^{\Theta_i}$ be the rule for constructing the confidence set for $\theta_i$, the second argument allowing to take as an input a ``confidence level".   
We denote  $I_i := \I_i(X_i,\alpha_i)$. 
Thus, $I_i = \I_i(X_i,\alpha_i)$ may be a marginal or a conditional $(1-\alpha_i)$ confidence set for $\theta_i$ as discussed later, but in general it is no more than a map from $\X_i \times [0,1]$ as described above.
For simplicity, in the rest of the paper we refer to $I_i$ as a confidence \emph{interval} (CI) like it would usually be if $\Theta_i=\R$, but with the understanding that everything discussed in this paper applies to the more general case of arbitrary confidence sets.


In our setup, the above rules are all required to be {\it predictable}, meaning that
\[
\S_i, \I_i, \alpha_i \text{ are } \F^{i-1}\text{-measurable},
\]
and we write $\S_i, \I_i, \alpha_i \in \F^{i-1}$. 
Naturally, the instantiated random variables $I_i=\I_i(X_i,\alpha_i), S_i=\S_i(X_i)$ both depend on $X_i$.
However, the \emph{rules} $\S_i,\I_i,\alpha_i$ must be $\F^{i-1}$-measurable, hence specified before observing $X_i$. 
We emphasize that the requirement to be $\F^{i-1}$-measurable also prevents the rules $\S_i, \alpha_i, \I_i$ from depending on $(X_1,\dots,X_{i-1})$ unless it is through $(S_1,\dots,S_{i-1})$. Importantly, $\S_i$ can depend on $\I_i$ because both are predictable, and hence $S_i$ can depend on $I_i$---for example, whether or not $I_i$ looks ``favorable'', a point to which we will return in later sections.

\medskip
Using these definitions, we now define an online selective-CI procedure. 
In the rest of the paper, we omit the term ``selective", but this is done only for the sake of readability. 
Thus, an {\it online CI protocol} proceeds as follows:



\begin{enumerate}
\item At time $i$, first commit to $\alpha_i, \S_i, \I_i \in \F^{i-1}$.
\item Then, observe $X_i$. 
Decide whether or not $\theta_i$ is selected for coverage by setting $S_i=\S_i(X_i)$. 
\item Report $I_i=\I_i(X_i,\alpha_i)$ if $S_i=1$. Then, increment $i$, and go back to step 1.
\end{enumerate}
We next discuss the metrics used to evaluate the errors made by an online CI protocol.

\subsection{Error metrics}

Let the unknown false coverage indicator be denoted $V_i := S_i \one{\theta_i \notin I_i}$. 
Hence, $V_i=1$ implies that we intended to cover $\theta_i$ but our reported CI $I_i$ failed to do so. Using the aforementioned terminology, define the false coverage proportion up to time $T$ as 
\[
\FCP(T) = \frac{\#\text{reported intervals that fail to cover their parameter}}{\#\text{reported intervals}} = \frac{\sum_{i \leq T} V_i}{\sum_{i \leq T} S_i},
\]
where $0/0=0$ per standard convention (i.e., if no intervals are constructed, then the false coverage proportion is trivially zero). 
The false coverage rate (FCR) and the modified FCR are defined, respectively, as
\[
\FCR(T) = \EE{\frac{\sum_{i \leq T} V_i}{\sum_{i \leq T} S_i}}, \ \ \ \ \  \mFCR(T) = \frac{\EE{\sum_{i \leq T} V_i}}{\EE{\sum_{i \leq T} S_i}}.
\]
Along the way, we will consider the relationship of the FCR to other error metrics like the positive FCR (pFCR), the false sign rate (FSR) and the well-known false discovery rate (FDR).


\subsection{Main objective}

The main objective of this paper is to develop and compare algorithms to specify $\I_i$ and $\alpha_i$ such that FCR or mFCR control is guaranteed at any time regardless of the choice of $\S_i$, that is, 
\[
\FCR(T) \leq \alpha \ \ \ \forall T \in \N, \ \ \ \ \ \ \text{ or }  \ \ \ \ \ \  \mFCR(T) \leq \alpha \ \ \ \forall T \in \N. 
\]
Specifically, we explore the following two avenues for constructing the CIs:
\begin{enumerate} 
\item Marginal CI: this has the guarantee that for any $a \in [0,1]$, we have
\begin{equation}
\label{def:marg-CI}
\PPst{\theta_i \notin \I_i(X_i,a)}{\F^{i-1}} \leq a,
\end{equation}
where the probability is taken only over the marginal measure of $X_i$, because the rule $\I_i$ is predictable.
\item Conditional CI: this has the property that for any $a \in [0,1]$, we have
\begin{equation}
\label{def:cond-CI}
\PPst{\theta_i \notin \I_i(X_i,a)}{\F^{i-1},S_i=1} \leq a,
\end{equation}
where the probability is taken over the measure of $X_i$ conditional on $S_i=1$, because $\I_i$ is predictable. \footnote{
In defining a conditional CI, one may consider requiring only that $\PPst{\theta_i \notin \I_i(X_i,a)}{S_i=1} \leq a$. 
This weaker condition will suffice for $\mFCR$ control, as can be seen from the proofs of our theorems. 
We chose to use the stronger requirement \eqref{def:cond-CI} partly because it is more natural to construct a conditional CI when conditioning on $S_1,\dots,S_{i-1}$ along with $S_i$; indeed, our simulations include a typical example where we do not know how to construct a conditional CI with the weaker property, but it is easy to construct one with the stronger property \eqref{def:cond-CI}.
}
\end{enumerate} 
For either choice, we must specify the level $\alpha_i \in (0,1)$ to use with $I_i$ if $\theta_i$ is selected for coverage. 

\smallskip
On accomplishing this main objective, we detail in Section \ref{sec:localization} exactly how it enables us to solve several other practical problems of interest, such as controlling the false sign rate. As mentioned in the end of the discussion in Section \ref{sec:discussion}, the entire setup of this paper applies equally well to prediction intervals instead of CIs.

\section{A method based on conditional inference}\label{sec:conditional}

A conceptually straightforward method to control the mFCR is to construct conditional CIs at the nominal level $(1-\alpha)$. This trivially controls the mFCR at level $\alpha$, as seen by the following argument.

\begin{theorem}
Constructing a $(1-\alpha)$ conditional CI after every selection ensures that $\forall T \in \N, \mFCR(T) \leq \alpha$.
\end{theorem}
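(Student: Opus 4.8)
The plan is to control the numerator of $\mFCR(T)$ term by term and then divide. Setting $\alpha_i = \alpha$ for every $i$ (the nominal level), I would first condition on $\F^{i-1}$ and exploit predictability: since $\S_i$ and $\I_i$ are $\F^{i-1}$-measurable, the only randomness entering $V_i = S_i \one{\theta_i \notin I_i}$ at step $i$ is through $X_i$. Writing $V_i$ as the product of indicators $\one{S_i = 1}\,\one{\theta_i \notin I_i}$ and applying the chain rule for conditional probability, I would factor
\[
\EEst{V_i}{\F^{i-1}} = \PPst{\theta_i \notin I_i}{\F^{i-1}, S_i = 1}\cdot \PPst{S_i = 1}{\F^{i-1}}.
\]

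Next I would invoke the conditional-CI guarantee \eqref{def:cond-CI} with $a = \alpha$ to bound the first factor by $\alpha$, which gives $\EEst{V_i}{\F^{i-1}} \leq \alpha\, \EEst{S_i}{\F^{i-1}}$ (using $\PPst{S_i=1}{\F^{i-1}} = \EEst{S_i}{\F^{i-1}}$ since $S_i$ is an indicator). Taking expectations, summing over $i \leq T$, and using linearity yields $\EE{\sum_{i \leq T} V_i} \leq \alpha\, \EE{\sum_{i \leq T} S_i}$; dividing through by $\EE{\sum_{i \leq T} S_i}$ delivers the claim. The boundary case $\EE{\sum_{i \leq T} S_i} = 0$ forces $S_i = 0$ almost surely for all $i \leq T$, hence $V_i = 0$ as well, so $\mFCR(T) = 0 \leq \alpha$ under the $0/0 = 0$ convention.

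The argument is essentially mechanical and I do not expect a genuine obstacle. The one point requiring care is the factorization step: it is valid precisely because, given $\F^{i-1}$, both $S_i$ and $\one{\theta_i \notin I_i}$ are deterministic functions of $X_i$, so that $\{S_i = 1\}$ is exactly the conditioning event appearing in \eqref{def:cond-CI}; one must decompose the joint event $\{\theta_i \notin I_i,\, S_i = 1\}$ using this conditioning rather than a weaker one (when $\PPst{S_i=1}{\F^{i-1}} = 0$ the product is simply zero and the bound is trivial). It is worth noting that the proof uses neither independence across time nor any structural property of $\S_i$ beyond predictability, which is exactly why mFCR control holds for \emph{any} choice of the selection rule.
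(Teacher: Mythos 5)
Your proof is correct and follows essentially the same route as the paper's: both bound $\EE{V_i} \leq \alpha\,\EE{S_i}$ term by term using the conditional coverage guarantee and then sum and divide. The only (cosmetic) difference is that you condition on $\F^{i-1}$ and invoke the full strength of \eqref{def:cond-CI}, whereas the paper conditions only on $S_i$, which is why its footnote can observe that the weaker guarantee $\PPst{\theta_i \notin I_i}{S_i=1}\leq\alpha$ already suffices for mFCR control.
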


\begin{proof}
From the definition \eqref{def:cond-CI} of a conditional CI it follows immediately that 
\begin{equation*}
\EEst{V_i}{S_i=1} = \EEst{I_{\theta_i\notin I_i}}{S_i=1 } = \PPst{\theta_i\notin I_i}{S_i=1} \leq \alpha. 
\end{equation*}
Together with the fact that $\EEst{V_i}{S_i=0} = 0$, we have 
\begin{equation*}
\EEst{V_i}{S_i} \leq \alpha \ \ \ \ \ \ \ \text{a.s.}, 
\end{equation*}
and hence,
\begin{align*}
\EE{\sum_i V_i} &= \sum_i \EE{V_i} = \sum_i \EE{S_i V_i}\\
&= \sum_i \EE{S_i \EEst{V_i}{S_i}} \leq \sum_i \EE{\alpha S_i}\\
&=\alpha \sum_i \EE{S_i} = \alpha \EE{\sum_i S_i}.
\end{align*}
Rearranging the first and last displays above yields the desired result.
\end{proof}

Constructing conditional CIs at the nominal level ensures also that $\FCR$ is controlled. 
As a matter of fact, even the conditional expectation of $\FCP$ given that at least one selection is made,
$$
\pFCR(T) := \EEst{\FCP(T)}{\sum_{i=1}^T S_i >0},
$$
is controlled when using conditional CIs. 
We call the above the {\it positive \FCR}, in analogy to the positive FDR \citep{storey2003positive}.

\begin{theorem}
Constructing a $(1-\alpha)$ conditional CI after every selection ensures that 
$$
\pFCR(T) \leq \alpha \ \ \ \ \ \ \ \ \forall T \in \N.
$$
\end{theorem}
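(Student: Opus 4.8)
The plan is to condition on the whole filtration $\F^T=\sigma(S_1,\dots,S_T)$, which makes the denominator $\sum_{i\le T}S_i$ of $\FCP(T)$ a constant and thereby turns control of a random ratio into a pointwise bound on the numerators $\EEst{V_i}{\F^T}$. This mirrors the standard route from $\mFCR$ (equivalently mFDR) control to $\pFCR$ (equivalently pFDR) control: once the realized number of selections is held fixed, the ratio becomes easy to handle.

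First I would prove the pointwise bound
\[
\EEst{V_i}{\F^T} \le \alpha\, S_i \qquad \text{a.s.}
\]
Since $S_i\in\F^T$ and $V_i = S_i\,\one{\theta_i\notin I_i}$, we may pull out $S_i$ to get $\EEst{V_i}{\F^T} = S_i\,\PPst{\theta_i\notin I_i}{\F^T}$; the bound is then immediate on $\{S_i=0\}$, and on $\{S_i=1\}$ it reduces to showing $\PPst{\theta_i\notin I_i}{\F^T}\le\alpha$.

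The main obstacle is exactly this last inequality, because $\F^T$ also records the \emph{future} decisions $S_{i+1},\dots,S_T$, whereas the conditional-CI guarantee \eqref{def:cond-CI} only conditions on $\F^{i-1}$ and $S_i$. I would resolve this with a conditional-independence argument. Given $\F^i=\sigma(\F^{i-1},S_i)$, the indicator $\one{\theta_i\notin I_i}$ is a function of $X_i$ alone (since $\I_i,\alpha_i\in\F^{i-1}$), while $(S_{i+1},\dots,S_T)$ is, by predictability, a measurable function of $(X_{i+1},\dots,X_T)$ built up recursively. Because the $X_j$ are independent and $\F^i$ is a function of $(X_1,\dots,X_i)$ only, $X_i$ and $(X_{i+1},\dots,X_T)$ remain conditionally independent given $\F^i$; hence $\{\theta_i\notin I_i\}$ is conditionally independent of $\sigma(S_{i+1},\dots,S_T)$ given $\F^i$. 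This yields $\PPst{\theta_i\notin I_i}{\F^T}=\PPst{\theta_i\notin I_i}{\F^i}$, and on $\{S_i=1\}$ the latter equals $\PPst{\theta_i\notin I_i}{\F^{i-1},S_i=1}\le\alpha$ by \eqref{def:cond-CI} taken at level $a=\alpha$. This is the only place where independence of the observations and predictability of the rules are genuinely used.

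Finally I would assemble the pieces. Writing $R=\sum_{i\le T}S_i$, which is $\F^T$-measurable, and using $\FCP(T)=0$ on $\{R=0\}$, the pointwise bound gives
\[
\EEst{\FCP(T)}{\F^T}\,\one{R>0} = \frac{\sum_{i\le T}\EEst{V_i}{\F^T}}{R}\,\one{R>0} \le \frac{\alpha R}{R}\,\one{R>0} = \alpha\,\one{R>0}.
\]
Taking expectations (using that $\one{R>0}\in\F^T$) and dividing by $\PP{R>0}$, which is assumed positive for $\pFCR(T)$ to be defined, then gives
\[
\pFCR(T) = \frac{\EE{\FCP(T)\,\one{R>0}}}{\PP{R>0}} \le \frac{\alpha\,\PP{R>0}}{\PP{R>0}} = \alpha,
\]
as claimed. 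Everything outside the conditional-independence step is routine bookkeeping.
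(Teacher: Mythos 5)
Your proof is correct and is essentially the paper's own argument: the paper likewise conditions on the full realized selection sequence $(S_1,\dots,S_T)$ to freeze the denominator, and its step $(a)$ is exactly your conditional-independence claim that the future decisions $(S_{i+1},\dots,S_T)$ are independent of $X_i$ given $(S_1,\dots,S_i)$ by predictability of the rules and independence of the observations, after which the conditional-CI definition \eqref{def:cond-CI} is applied. The only difference is cosmetic: you phrase the conditioning via the $\sigma$-algebra $\F^T$ and a pointwise bound $\EEst{V_i}{\F^T}\le\alpha S_i$, while the paper works with a fixed realization $(s_1,\dots,s_T)$ and then averages over the conditional law given $\sum_i S_i>0$.
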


\begin{proof}
Consider any sequence $(s_1,\dots,s_T)\in \{0,1\}^T$ such that $\sum_i s_i >0$. 
We have
\begin{align*}
\EEst{\frac{\sum_i V_i}{\sum_i S_i}}{S_1 = s_1,\dots,S_T=s_T} &= \frac{1}{\sum_i s_i} \EEst{\sum_i V_i}{S_1 = s_1,\dots,S_T=s_T}\\
&= \frac{1}{\sum_i s_i} \EEst{\sum_{\{i\leq T: s_i = 1\}} I_{\theta_i\notin I_i}}{S_1 = s_1,\dots,S_T=s_T}\\
&= \frac{1}{\sum_i s_i} \sum_{\{i\leq T: s_i = 1\}} \PPst{\theta_i\notin I_i}{S_1 = s_1,\dots,S_T=s_T}\\
&\stackrel{(a)}{=} \frac{1}{\sum_i s_i} \sum_{\{i\leq T: s_i = 1\}} \PPst{\theta_i\notin I_i}{S_1 = s_1,\dots,S_{i-1}=s_{i-1}, S_i=1}\\
&\leq \frac{1}{\sum_i s_i} \sum_{\{i\leq T: s_i = 1\}} \alpha\\
&=\alpha,
\end{align*}
where equality $(a)$ uses the fact that the selection decisions  $(S_{i+1},\dots,S_T)$ are independent of $X_i$ given $(S_1,\dots,S_{i})$ because the selection rules $\{\S_j\}$ are predictable. 
The original claim follows by taking expectation over the conditional distribution of $S_1,\dots,S_T$ given that $\sum_{i=1}^TS_i >0$. 
\end{proof}

We immediately conclude that with conditional $(1-\alpha)$ CIs we also have
$$
\FCR(T) = \pFCR(T) \cdot \PP{\sum_{i=1}^T S_i >0}\leq \pFCR(T) \leq \alpha.
$$

Control of the $\pFCR$ (and hence FCR) may seem pleasant, but in fact this strong guarantee has a price. Our two main criticisms of the conditional approach are:
\begin{enumerate}
\item \textbf{Incompatibility.}  Conditional CIs are not able to ensure compatibility between selection decisions and the reported CI. For example, it is impossible to ensure that all selected CIs are sign-determining, meaning that it is impossible to select only those confidence intervals that do not contain 0. This is discussed further and explicitly demonstrated in Subsection \ref{subsec:inconsistency}.

\item \textbf{Intractability.} The conditional distribution of $X_i$ given $(S_{i-1},\dots,S_1)$ and the event $\{S_i=1\}$, is the distribution resulting from restricting $X_i$ to some subset of $\X$, which may be intractable to compute in general. 
At the very least, the conditional approach requires a case-by-case treatment; depending on the marginal distribution of $X_i$ and the selection rules $\S_1,\dots,\S_i$, computing the conditional distribution may be far from trivial. 
\end{enumerate}

In the next sections, we describe a marginal approach to controlling the FCR, and elaborate on its various advantages with respect to the aforementioned conditional approach.

\section{Adjusting marginal intervals: the LORD-CI procedure}\label{sec:lordci}

In what follows, an {\it algorithm} is a sequence of mappings from past selection decisions to
confidence levels, meaning that it maps $(S_1,\dots,S_{i-1})$ to $\alpha_i$. 
By definition, such an $\alpha_i$ is $\F^{i-1}$-measurable, hence a procedure that constructs a {\it marginal} confidence interval at level $(1-\alpha_i)$ whenever $S_i=1$, is a legitimate online CI protocol. 
We will refer to such a procedure as a {\it marginal online CI protocol/procedure}. 
A trivial marginal online CI protocol can be obtained by taking any {\it fixed} sequence of $\alpha_i$ such that the series $\sum_{i=1}^\infty \alpha_i \leq \alpha$; this procedure is called alpha-spending in the context of online FDR control by \cite{foster2008alpha}, and controls the familywise error rate (which in our context is the probability of even a single miscoverage event). Naturally, this is a much more stringent notion of error, and hence the resulting selected CIs will be excessively wide. The question we will address below is the following: is there a nontrivial algorithm to set the $\alpha_i$ so that FCR is controlled? 

\subsection{mFCR control for arbitrary selection rules}

\smallskip
Our first result identifies a sufficient condition for an algorithm to imply $\mFCR$ control. 
Thus, we first associate any algorithm with an estimated false coverage proportion,
\[
\fcphat(T) := \frac{\sum_{i\leq T} \alpha_i}{(\sum_{i \leq T} S_i) \vee 1}.
\]
We may then define the following procedure for online FCR control.

\begin{definition}[LORD-CI procedure] \label{def:lord:ci:gen}
A LORD-CI procedure is any online protocol that constructs marginal $(1-\alpha_i)$ confidence intervals, where $\alpha_i \in \F^{i-1}$ are defined in a predictable fashion to maintain the invariant
\begin{equation}\label{eq:LORDCI-invariant}
\forall T \in \N, \quad \fcphat(T) \leq \alpha,
\end{equation}
regardless of the selection rules $\S_i$.
\end{definition}

\noindent Any LORD-CI procedure comes with the following theoretical guarantee.

\begin{theorem}
Given an arbitrary sequence of selection rules made by the user, any LORD-CI procedure has the guarantee that $\forall T \in \N, \mFCR(T) \leq \alpha$.
\end{theorem}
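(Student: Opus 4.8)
The plan is to control $\mFCR(T) = \EE{\sum_{i\le T} V_i}/\EE{\sum_{i\le T} S_i}$ by bounding the numerator $\EE{\sum_{i\le T} V_i}$ in terms of $\EE{\sum_{i\le T} \alpha_i}$, and then using the LORD-CI invariant \eqref{eq:LORDCI-invariant} to relate $\EE{\sum_{i\le T}\alpha_i}$ back to $\EE{\sum_{i\le T} S_i}$. The crucial structural fact I would exploit is predictability: since $\alpha_i, \S_i, \I_i \in \F^{i-1}$, the rule $\I_i(\cdot, \alpha_i)$ is fixed before $X_i$ is observed, so the marginal coverage guarantee \eqref{def:marg-CI} applies conditionally on $\F^{i-1}$.

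First I would handle the individual terms $\EE{V_i}$. Writing $V_i = S_i \one{\theta_i \notin I_i}$ and using $S_i \le 1$, I would bound $V_i \le \one{\theta_i \notin I_i}$ and then take conditional expectations given $\F^{i-1}$. Because $\alpha_i$ and $\I_i$ are $\F^{i-1}$-measurable, the marginal CI property \eqref{def:marg-CI} gives
\begin{equation*}
\EEst{V_i}{\F^{i-1}} \le \PPst{\theta_i \notin \I_i(X_i,\alpha_i)}{\F^{i-1}} \le \alpha_i.
\end{equation*}
Taking a full expectation and summing over $i \le T$ yields $\EE{\sum_{i\le T} V_i} \le \EE{\sum_{i\le T}\alpha_i}$. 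This is the step where predictability does the real work: if $\alpha_i$ were allowed to depend on $X_i$, the conditional coverage bound would not apply and the argument would collapse.

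Next I would invoke the defining invariant of the LORD-CI procedure. By \eqref{eq:LORDCI-invariant}, for every realization we have $\sum_{i\le T}\alpha_i \le \alpha \cdot \big((\sum_{i\le T} S_i)\vee 1\big)$. I would take expectations of this pointwise inequality, giving $\EE{\sum_{i\le T}\alpha_i} \le \alpha\, \EE{(\sum_{i\le T} S_i)\vee 1}$. Chaining with the previous display and dividing by $\EE{\sum_{i\le T} S_i}$ would \emph{almost} give the result, except for the discrepancy between $(\sum_i S_i)\vee 1$ and $\sum_i S_i$ introduced by the convention at zero selections.

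I expect the main obstacle to be precisely this $\vee 1$ gap, i.e.\ correctly handling the event $\{\sum_{i\le T} S_i = 0\}$ so that it does not inflate the bound. On that event no intervals are reported, so $V_i = 0$ for all $i$ and the event contributes nothing to $\EE{\sum_i V_i}$; I would therefore restrict attention to $\{\sum_i S_i \ge 1\}$, on which $(\sum_i S_i)\vee 1 = \sum_i S_i$ exactly. Concretely, I would write $\EE{\sum_i V_i} = \EE{\one{\sum_i S_i \ge 1}\sum_i V_i}$ and similarly insert the indicator before applying the invariant, so that the stray factor becomes $\EE{\one{\sum_i S_i\ge 1}\sum_i S_i} = \EE{\sum_i S_i}$. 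Putting the pieces together then gives $\EE{\sum_{i\le T} V_i} \le \alpha\,\EE{\sum_{i\le T} S_i}$, which is exactly $\mFCR(T)\le\alpha$ after rearranging. The argument parallels the $\mFCR$ proof for conditional CIs given earlier, with the marginal property \eqref{def:marg-CI} replacing \eqref{def:cond-CI} and the invariant \eqref{eq:LORDCI-invariant} supplying the link to $\sum_i S_i$.
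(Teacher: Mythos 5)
Your main chain of inequalities is exactly the paper's proof: bound $V_i \le \one{\theta_i \notin I_i}$ using $S_i \le 1$, condition on $\F^{i-1}$ so that predictability of $\alpha_i$ and $\I_i$ lets the marginal guarantee \eqref{def:marg-CI} give $\EE{V_i} \le \EE{\alpha_i}$, sum over $i$, and invoke the invariant \eqref{eq:LORDCI-invariant}. Up to that point you and the paper agree step for step.

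The one place you go beyond the paper --- patching the discrepancy between $(\sum_i S_i)\vee 1$ and $\sum_i S_i$ --- is where your argument does not go through. The indicator $\one{\sum_{j\le T} S_j \ge 1}$ depends on $X_i,\dots,X_T$ and is therefore not $\F^{i-1}$-measurable, so you cannot insert it and still apply the conditional coverage bound: the inequality $\EE{\one{\theta_i\notin I_i}\,\one{\sum_j S_j\ge 1}} \le \EE{\alpha_i \one{\sum_j S_j \ge 1}}$ is false in general. For an arbitrary selection rule the miscoverage event can be contained in the selection event (e.g.\ $\theta_i=0$, $I_i$ the symmetric interval, and $\S_i(x)=\one{|x|>z_{\alpha_i/2}}$, so that selection occurs exactly when coverage fails); then the left side equals $\alpha_i$ while the right side is $\alpha_i\,\PP{\sum_j S_j\ge 1}$, which can be far smaller. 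So the $\vee 1$ gap you correctly identified cannot be closed by inserting the indicator. To be fair, the paper's own step $(c)$ passes silently from $\sum_i\alpha_i \le \alpha\bigl((\sum_i S_i)\vee 1\bigr)$ to $\EE{\sum_i\alpha_i}\le\alpha\EE{\sum_i S_i}$, which is the same leap; what the displayed chain actually delivers for arbitrary selection rules is $\EE{\sum_i V_i}\le\alpha\,\EE{(\sum_i S_i)\vee 1}$. You have put your finger on a real subtlety in the statement, but your proposed repair is not a valid fix.
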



\begin{proof}
By definition of a false coverage event, we have
\begin{align*}
\EE{\sum_i V_i} 
&=~ \EE{\sum_i S_i I_{\theta_i \notin I_i}} 
 ~\stackrel{(a)}{\leq}~ \sum_i \EE{I_{\theta_i \notin I_i}} \\
 &=~ \sum_i \EE{\EEst{I_{\theta_i \notin I_i}}{\F^{i-1}}} 
 ~\stackrel{(b)}{\leq}~ \sum_i \EE{\alpha_i} \\
 &=~ \EE{\sum_i \alpha_i} 
 ~\stackrel{(c)}{\leq}~ \alpha \EE{\sum_i S_i},
\end{align*}
where  inequality $(a)$ holds because $S_i \leq 1$, inequality $(b)$ by the definition \eqref{def:marg-CI} of a $(1-\alpha_i)$ marginal CI, and inequality $(c)$ by the invariance \eqref{eq:LORDCI-invariant}. Rearranging the first and last expression yields the desired result.
\end{proof}

If one really insisted on requiring FCR control as opposed to mFCR control, we provide a guarantee for a subclass of ``monotone'' selection rules, as introduced below. 

\subsection{Monotonicity of algorithms, intervals and selection rules}

The symbol $\succeq$ is used to compare vectors coordinatewise, so $(v) \succeq (w)$ means that $v_i \geq w_i$ for all $i$. 

An online FCR algorithm is called \emph{monotone} if for any two vectors $(s_1,\dots,s_{i-1}) \succeq (\widetilde s_1, \dots, \widetilde s_{i-1})$, we have $\alpha_i(s_1,\dots,s_{i-1}) \geq \alpha_i(\widetilde s_1,\dots, \widetilde s_{i-1})$. 
Equivalently, an online FCR algorithm is monotone if
\begin{equation}
\alpha_i \geq \widetilde{\alpha}_i \ \ \ \text{ whenever } (S_1,\dots,S_{i-1}) \succeq (\widetilde S_1, \dots, \widetilde S_{i-1}),
\end{equation}
where $\widetilde \alpha_i$ is the level produced by the online FCR algorithm, when presented with the history of selection decisions $(\widetilde S_1, \dots, \widetilde S_{i-1})$.
We say that a CI rule $\I$ is {\it monotone} if 
\[
\I(x,a_2)\subseteq \I(x,a_1) \text{ for all } a_1< a_2 \text{ and } x\in \X. 
\]
Monotonicity is satisfied for most natural (even non-equivariant) CI constructions, and thus we do not view this as a restriction. Irrespective of whether the online FCR algorithm and CI rule are monotone, we say that a selection rule $\S_i$ is \emph{monotone} if 
\begin{equation}
S_i \geq \widetilde S_i \text{ whenever } (S_1,\dots,S_{i-1}) \succeq (\widetilde S_1, \dots, \widetilde S_{i-1}), \label{eq:monotone}
\end{equation}
where, as before, $\widetilde S_i$ is used to denote the selection decision at time $i$, for the same observation $X_i$, but for a different history $(\widetilde S_1, \dots, \widetilde S_{i-1})$.

As a simple special case, if each rule $\S_i$ is independent of $\F^{i-1}$, then such a selection rule is trivially monotone, even if the underlying online FCR algorithm is not. 
In other words, if the final decision $S_i=\S_i(X_i)$  is based only on $X_i$ and on none of the past decisions, then such a rule is monotone. 
For example, setting $S_i = \one{X_i > 3}$ for every $i$ constitutes a trivial monotone selection rule. 



\subsection{FCR control for monotone selection rules}

We can provide the following guarantee for the nontrivial class of monotone selection rules. 

\begin{theorem}\label{thm:lord:ci:monotone}
Given an arbitrary sequence of monotone selection rules chosen by the user, 
any LORD-CI procedure that maintains the invariant \eqref{eq:LORDCI-invariant} also satisfies that $\forall T \in \N, \FCR(T) \leq \alpha$.
\end{theorem}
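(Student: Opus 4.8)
The plan is to control $\FCR(T)=\EE{\FCP(T)}$ by a coordinatewise argument and then invoke the invariant \eqref{eq:LORDCI-invariant}, mirroring the $\mFCR$ proof but this time keeping the random denominator inside the expectation. Writing $N:=\sum_{k\le T}S_k$, so that $\FCP(T)=\sum_{i\le T} V_i/(N\vee 1)$, the target reduces to the per-coordinate ``super-uniformity'' bound
\begin{equation*}
\EE{\frac{V_i}{N\vee 1}}\ \le\ \EE{\frac{\alpha_i}{N\vee 1}}\qquad\text{for every } i\le T .
\end{equation*}
Granting this, summing over $i$ and using linearity gives $\FCR(T)\le\EE{\left(\sum_{i\le T}\alpha_i\right)/(N\vee 1)}=\EE{\fcphat(T)}\le\alpha$, where the final inequality is precisely the LORD-CI invariant. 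So the entire theorem rests on the displayed inequality.

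To prove it I would condition on $\G_i:=\sigma(X_k:k\ne i)$. Since $\F^{i-1}\subseteq\G_i$ the level $\alpha_i$ is $\G_i$-measurable, and because the observations are independent the only remaining randomness given $\G_i$ is $X_i$ itself. The single binary quantity $S_i=\S_i(X_i)$ is the only channel through which $X_i$ can affect later selections, so given $\G_i$ the count $N$ takes just two values: let $N^{(1)}$ and $N^{(0)}$ denote its ($\G_i$-measurable) values on $\{S_i=1\}$ and $\{S_i=0\}$ respectively. As $V_i=0$ on $\{S_i=0\}$,
\begin{equation*}
\EEst{\frac{V_i}{N\vee 1}}{\G_i}=\frac{\PPst{S_i=1,\ \theta_i\notin I_i}{\G_i}}{N^{(1)}\vee 1}\ \le\ \frac{\alpha_i}{N^{(1)}\vee 1},
\end{equation*}
where I drop $\one{S_i=1}$ and then invoke the marginal-CI property \eqref{def:marg-CI}: since $X_i\independent\G_i$, conditioning on $\G_i$ is the same as conditioning on $\F^{i-1}$ for the law of $X_i$, and the $\G_i$-measurable level $\alpha_i$ may be substituted for the dummy argument.

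Monotonicity of the selection rules is used only to order the two denominators. Replacing the realized $S_i$ by the forced value $0$ yields a selection history coordinatewise dominated by the one obtained from $S_i=1$; feeding this into \eqref{eq:monotone} and inducting forward shows every later selection is no larger, whence $N^{(0)}\le N^{(1)}$ and $1/(N^{(0)}\vee 1)\ge 1/(N^{(1)}\vee 1)$. Therefore
\begin{align*}
\EEst{\frac{\alpha_i}{N\vee 1}}{\G_i}
&=\alpha_i\Big[\PPst{S_i=1}{\G_i}\tfrac{1}{N^{(1)}\vee 1}+\PPst{S_i=0}{\G_i}\tfrac{1}{N^{(0)}\vee 1}\Big]\\
&\ge\alpha_i\Big[\PPst{S_i=1}{\G_i}+\PPst{S_i=0}{\G_i}\Big]\tfrac{1}{N^{(1)}\vee 1}=\frac{\alpha_i}{N^{(1)}\vee 1},
\end{align*}
which is exactly the bound already obtained for the left-hand side. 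Taking expectations yields the per-coordinate inequality, and the reduction above finishes the proof. Notably, neither monotonicity of the algorithm nor of the CI rule is needed for this step.

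I expect the main obstacle to be the dependence between the numerator $V_i$ and the random denominator $N\vee 1$: because $S_i$ enters every subsequent selection decision, $N$ is itself a function of $X_i$ and cannot be treated as constant when $X_i$ is integrated out, which is exactly why the clean $\mFCR$ argument does not immediately upgrade to $\FCR$. Conditioning on $\G_i$ isolates this dependence into the pair $(N^{(0)},N^{(1)})$, and the monotone-selection hypothesis collapses it to the single comparison $N^{(0)}\le N^{(1)}$ that drives the super-uniformity step; for non-monotone rules these two denominators are incomparable, the termwise bound can fail, and only $\mFCR$ control survives.
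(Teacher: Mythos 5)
Your proof is correct and follows essentially the same route as the paper: the per-coordinate inequality you isolate is exactly Lemma~\ref{lem:transfer}, and your $N^{(1)}$ (the value of the denominator when selection at time $i$ is forced) is precisely the paper's hallucinated count $\sum_{j\le T}\widetilde S_j$, with the comparison $N^{(0)}\le N^{(1)}$ playing the role of $\widetilde S_j\ge S_j$. Your explicit conditioning on $\G_i=\sigma(X_k:k\ne i)$ and the two-point decomposition of $N$ is simply a more formal rendering of the paper's ``thought experiment'' with the substituted observation $X^*$.
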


\begin{proof}
By definition of $\FCR(T)$, we have
\begin{align}
\FCR(T) &= \EE{\frac{\sum_{i \leq T} V_i}{\sum_{j \leq T} S_j}}
~=~ \sum_{i \leq T} \EE{ \frac{S_i \one{\theta_i \notin I_i} }{\sum_{j \leq T} S_j}} ~\leq~ 
\sum_{i \leq T} \EE{ \frac{\alpha_i }{\sum_{j \leq T} S_j}},
\end{align}
where the sole inequality follows by Lemma \ref{lem:transfer}, introduced after this proof. 
Thus, we see that
\[
\FCR(T) \leq \EE{ \frac{ \sum_{i \leq T} \alpha_i }{\sum_{j \leq T} S_j} } \leq \alpha,
\]
where the last inequality holds due to invariant \eqref{eq:LORDCI-invariant}.
\end{proof}

The critical step in the aforementioned argument is the invocation of the following powerful lemma.

\begin{lemma}\label{lem:transfer}
Given an arbitrary sequence of monotone selection rules, we have
\[
\EE{ \underbrace{\frac{S_i \one{\theta_i \notin I_i} }{\sum_{j \leq T} S_j}}_{A_i} }
\leq 
\EE{ \frac{\alpha_i }{\sum_{j \leq T} S_j} }.
\]
\end{lemma}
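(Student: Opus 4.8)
The plan is to reduce the quantity $\EE{A_i}$ to something controllable by decoupling the numerator's miscoverage event $\one{\theta_i \notin I_i}$ (which depends on $X_i$) from the denominator $\sum_{j\leq T}S_j$. The core difficulty is that $S_i=\S_i(X_i)$ and $I_i=\I_i(X_i,\alpha_i)$ both depend on $X_i$, and so do all the later selections $S_{i+1},\dots,S_T$ indirectly through the filtration; we cannot simply pull the marginal coverage bound \eqref{def:marg-CI} out of the fraction because the denominator is correlated with $X_i$. First I would condition on $\F^{i-1}$, since $\alpha_i,\S_i,\I_i$ are all $\F^{i-1}$-measurable, so that inside the conditional expectation these rules are fixed and the randomness is only in $X_i,X_{i+1},\dots$.

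\textbf{The key maneuver: a coupling / worst-case denominator argument.} The essential idea is monotonicity. I would introduce a comparison sequence in which the selection at time $i$ is forced to be $1$: formally, compare the true trajectory against the counterfactual trajectory $(\widetilde S_1,\dots,\widetilde S_T)$ that shares the same history $(S_1,\dots,S_{i-1})$ but where we pretend $X_i$ forces $S_i=1$. Because the selection rules are monotone, setting $S_i=1$ (the largest possible value) can only increase all subsequent selections $S_{i+1},\dots,S_T$ in the coordinatewise order, hence can only increase the denominator $\sum_{j\leq T}S_j$. The plan is to exploit that on the event $\{\theta_i\notin I_i\}$ we need $S_i=1$ for the numerator to be nonzero anyway, so replacing the denominator by the version computed under the forced selection $S_i=1$ does not change the numerator but makes the denominator $\sum_{j\leq T}S_j^{\geq}$ \emph{independent of the miscoverage indicator given $\F^{i-1}$}. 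Concretely, writing $D_i$ for the denominator obtained by fixing $S_i=1$, monotonicity gives $\sum_{j\leq T}S_j \geq D_i\cdot\one{S_i=1}$ in the relevant regime, so that
\[
A_i = \frac{S_i\one{\theta_i\notin I_i}}{\sum_{j\leq T}S_j} \leq \frac{\one{\theta_i\notin I_i}}{D_i},
\]
where the crucial point is that $D_i$, as a function of $X_i$ through $S_i=1$ being forced, no longer depends on the actual value of $X_i$ in a way that is coupled to the coverage event. I would then argue that $D_i$ and $\one{\theta_i\notin I_i}$ are conditionally independent given $\F^{i-1}$, so that
\[
\EEst{A_i}{\F^{i-1}} \leq \EEst{\frac{1}{D_i}}{\F^{i-1}}\cdot\EEst{\one{\theta_i\notin I_i}}{\F^{i-1}} \leq \alpha_i\,\EEst{\frac{1}{D_i}}{\F^{i-1}},
\]
using the marginal CI guarantee \eqref{def:marg-CI} for the last factor.

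\textbf{Closing the loop.} The final step is to reverse the same monotonicity comparison on the right-hand side: since forcing $S_i=1$ only inflates the denominator, we have $D_i \geq \sum_{j\leq T}S_j$ whenever $S_i=1$ actually holds, and more carefully $\EEst{1/D_i}{\F^{i-1}}$ should be matched back to $\EEst{\alpha_i/\sum_{j\leq T}S_j}{\F^{i-1}}$ after pulling the $\F^{i-1}$-measurable factor $\alpha_i$ outside. The clean way to phrase this is that the same coupling which bounds the numerator's fraction from above simultaneously yields the target $\EEst{\alpha_i/(\sum_{j\leq T}S_j)}{\F^{i-1}}$ as an upper bound, because on $\{S_i=1\}$ the denominators coincide and on $\{S_i=0\}$ the numerator $A_i$ vanishes. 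Taking expectations over $\F^{i-1}$ then gives the stated inequality.

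\textbf{Main obstacle.} I expect the genuinely delicate part to be making the coupling rigorous: one must carefully define the counterfactual denominator $D_i$ and verify that monotonicity of every selection rule $\S_j$ for $j>i$ propagates the inequality $S_j \geq \widetilde S_j$ from the single forced coordinate, and that $D_i$ is indeed conditionally independent of $X_i$'s miscoverage event given $\F^{i-1}$. The subtlety is that $D_i$ is built from $X_{i+1},\dots,X_T$ together with the forced history, and one needs the predictability (i.e., $\F^{i-1}$-measurability of all rules) plus independence of the $X_j$ across time to guarantee this factorization. Handling the $0/0=0$ convention and the event where no selections occur is a routine but necessary bookkeeping check.
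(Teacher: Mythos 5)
Your proposal is correct and follows essentially the same route as the paper's proof: you hallucinate a counterfactual trajectory in which selection at time $i$ is forced, observe that on $\{S_i=1\}$ the true and counterfactual denominators coincide (so the substitution is exact where the numerator is nonzero), decouple the forced denominator from the miscoverage event via independence of the $X_j$ and predictability of the rules to apply the marginal guarantee \eqref{def:marg-CI}, and finally use monotonicity to replace the inflated counterfactual denominator by the true one. The only cosmetic difference is that you condition on $\F^{i-1}$ and invoke conditional independence, whereas the paper conditions on the $\sigma$-field generated by everything except $X_i$; both yield the same factorization.
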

Intuitively, the statement of the above lemma is obvious if the expectation could be taken separately in the numerator, as if it was independent of the denominator, because $\EE{S_i\one{\theta_i \notin I_i}} \leq \EE{\one{\theta_i \notin I_i}} \leq \alpha_i$ by construction \eqref{def:marg-CI}. The following proof demonstrates that monotonicity allows us to formally perform such a step.
\begin{proof}
Without loss of generality, we can ignore the case when $S_i=0$ almost surely for some $i$; in other words, if we would never select $\theta_i$, then $V_i=0$ almost surely, and we can just ignore the time instant $i$. Hence, we only consider the case when at least one value of $X_i$ leads to selection.

To derive a bound on $\EE{A_i}$, consider the following thought experiment. 
Let us hallucinate what selection decisions would have occurred under a slightly different series of observations, namely
\[
\widetilde{X} := (X_1,X_2,\dots,X_{i-1},X^*,X_{i+1},\dots,X_T),
\]
where $X^*$ is any value that would have led to selection of $\theta_i$, which is a predictable choice, because it can be made based on only the predictable selection rule $\S_i$. Let the sequence of selection decisions made by the same algorithm  on $\widetilde X$ be denoted $\widetilde S_i$, the levels be denoted $\widetilde \alpha_i$, and the constructed intervals be $\widetilde I_i$.
We then claim that
\[
A_i \equiv \frac{S_i \one{\theta_i \notin I_i} }{\sum_{j \leq T} S_j} = \frac{S_i\one{\theta_i \notin I_i} }{\sum_{j \leq T} \widetilde S_j} =: \widetilde{A}_i,
\]
where we have intentionally altered only the denominator.
To see that the above equality holds, first note that if $S_i=0$, then $A_i = \widetilde{A}_i=0$. 
Then note that if $S_i=1$, then $\widetilde{S}_i=S_i$ for all $i$. 
Indeed, because  $X_j = \widetilde{X}_j,\text{ for } j\leq i-1$, the first $i-1$ selection decisions are identical by construction; then if $S_i=1$ (and $\widetilde S_i=1$ by construction), then $\F^i = \widetilde{\F}^i$, and so every future selection decision is also identical (and also the constructed CIs, at levels $\alpha_i$). Hence,
\begin{align*}
\EE{A_i} &= \EE{\widetilde{A_i}} \stackrel{(a)}{\leq} \EE{ \frac{\one{\theta_i \notin I_i} }{\sum_{j \leq T} \widetilde S_j} }\\
&\stackrel{(b)}{=} \EE{\frac1{\sum_{j \leq T} \widetilde S_j} \EEst{ \one{\theta_i \notin I_i}  }{\widetilde{\F}^{n \backslash i}} } \\
&\stackrel{(c)}{\leq} \EE{ \frac{ \alpha_i }{\sum_{j \leq T} \widetilde S_j} } ~\stackrel{(d)}{\leq}~ \EE{ \frac{ \alpha_i }{\sum_{j \leq T} S_j} },
\end{align*}
where inequality $(a)$ holds because $S_i \leq 1$, equality $(b)$ follows because $\frac1{\sum_{j \leq T} \widetilde S_j} $ is $\widetilde{\F}^{n \backslash i}$-measurable because $\widetilde S_i = 1$ by construction, inequality $(c)$ holds by definition \eqref{def:marg-CI} of a marginal CI, and inequality $(d)$ holds because $\widetilde S_j \geq S_j$ for all $j$ by the monotonicity of selection rules. This completes the proof of the lemma.
\end{proof}

The above is a generalization of lemmas that have been proved in the context of online FDR control by \cite{javanmard2016online,RYWJ17}, since the selection event $\{S_i=1\}$ may or may not be associated with the miscoverage event $\{\theta_i \notin I_i\}$, but in online FDR control, the rejection event $\{R_i=1\}\equiv\{P_i \leq \alpha_i\}$ is obviously directly related to the false discovery event $\{P_i \leq \alpha_i,i \in \nulls\}$. We will later see that online FCR control captures online FDR control as a special case.

\subsection{An explicit monotone online FCR algorithm}

By the theorems above, the class of procedures in Definition \ref{def:lord:ci:gen} yield mFCR (FCR) control. 
To obtain a specific procedure, we use the LORD++ online FDR algorithm \citep{RYWJ17} to set the sequence of the $\alpha_i$. LORD++ was originally designed to maintain the invariant \eqref{eq:LORDCI-invariant} in the context of testing, i.e., when $S_i=1$ stands for rejection of the $i$-th null hypothesis. 
In the absence of $p$-values, our algorithm instead substitutes rejection events by arbitrary selection events ($S_i=1$). 
We call the aforementioned adaptation of LORD++ to the context of CIs {\it the LORD-CI algorithm}, and refer to the corresponding marginal online CI protocol as {\it the LORD-CI procedure}. 
In the sequel, unless indicated otherwise, whenever we refer to a LORD-CI procedure (or simply {LORD-CI}), we mean {\it the} LORD-CI procedure, that is, the protocol utilizing LORD++.
An explicit description of LORD-CI is given in Protocol \ref{alg:lordci} below. 

\smallskip
In Protocol \ref{alg:lordci}, $\{\gamma_i\}_{i=1}^{\infty}$ is a deterministic nonincreasing sequence of positive constants summing to one, that is specified in advance; $0\leq W_0\leq \alpha$ is a prespecified constant; $\{\S_i\}$ is a sequence of arbitrary predictable selection rules; and $\{\I_i\}$ is now a sequence of marginal CIs, that is, each $\I_i$ has the property \eqref{def:marg-CI}. 
On implementing Protocol \ref{alg:lordci}, set $\gamma_i = 0$ whenever $i\leq 0$ (this happens for $j=1$). 
It is easy to verify that $\alpha_i$ (see line 7 in Protocol \ref{alg:lordci}) is monotone, because it has an additional nonnegative term in the summation with every new selection. One may also verify that $\{\alpha_i\}$ satisfies the invariant \eqref{eq:LORDCI-invariant}, because $\sum_i \alpha_i$ is always less than $(\sum_i S_i)\alpha$.


\bigskip


\begin{algorithm}[H]
\SetAlgoLined
\SetKwInOut{Input}{Input}
\SetKwInOut{Output}{Output}
\Input{sequence $\{X_i\}$ observed sequentially; \ prespecified deterministic sequence $\{\gamma_i\}$ summing to one; \ constant $\  W_0 \in (0,\alpha)$; \ arbitrary selection rules $\{\S_i\}$; \ marginal CI rules $\{\I_i\}$; \ $\alpha$}
\Output{online FCR-adjusted selective CIs}
$i \leftarrow 1$ \hspace{3.5in} // tracks time\\
\For{$j=1,2,...$}{
\While{$\S_i(X_i)=0$}{
$i \leftarrow i+1$
}
$\tau_j \leftarrow i$ \hspace{3.25in} // time of the $j$-th selection \\
$
\alpha_i \leftarrow \gamma_iW_0 + (\alpha-W_0)\gamma_{i-\tau_1} + \alpha \displaystyle \sum_{\{k: \tau_{k}<i, \tau_{k}\neq \tau_1\}}\gamma_{i-\tau_{k}} 
$ \quad \quad \quad \quad // a monotone update rule\\
Report $I_i = \I_i(X_i,\alpha_i)$\\
$i \leftarrow i+1$
}

\caption{The LORD-CI procedure}
\label{alg:lordci}
\end{algorithm}

%
%
%




\section{Selections that depend on the candidate CIs}\label{sec:localization}

In the LORD-CI procedure, the predictable sequences of selection rules $\S_i$ and marginal CI rules $\I_i$ are both arbitrary, and these may be specified independently of each other. 
In this section we demonstrate how tying the selection rule to the confidence interval rule, by letting the (candidate) marginal CI {\it determine} whether $\theta_i$ is selected or not, can lead to many instantiations of the LORD-CI procedure that are of practical interest. 

Informally, the idea is as follows. 
Suppose that we made some choice in advance for the marginal CI rule. 
Suppose also that we have in mind a criterion for what constitutes a ``good" reported CI. 
For example, when $\Theta_i\equiv \R$, we might consider a reported CI ``good" if it excludes zero. 
Then at each step $i$, upon observing $X_i$, we pretend that we were to construct $I_i = \I(X_i,\alpha_i)$ where $\alpha_i$ are set by the LORD-CI algorithm, but we only actually select and report it if it is ``good". 
By design, then, we only report ``good" intervals. 
Note that because $\I_i$ is predictable, choosing to report an interval only if it is ``good'' is a {\it predictable} selection rule. Therefore we may use LORD-CI to determine levels of coverage and immediately be guaranteed FCR control. 
This is formalized in the definition below. 
We just remark that these ideas appeared first in \citet{weinstein2014selective}, but their treatment is rather informal, and, importantly, their proposed procedure is not an online procedure. 

For the remainder of this section, whenever we speak of a CI rule, it will be assumed to be monotone. Again, we do not view this as a restriction. 

\subsection{From coverage to localization}
Suppose that for each $i$ we have a collection $K_{i1},\dots,K_{iL_i} \in 2^{\Theta_i}$ of pre-specified disjoint subsets of $\Theta_i$. Being able to say that $\theta_i\in K_{il}$ for exactly one $l\in \{1,\dots,L_i\}$ qualifies as having ``localized'' the signal \footnote{If the sets are not disjoint a-priori, one may either create a new set for the intersection, or generalize the definition of localization to allow for the reporting of multiple sets.}. On observing $X_i$, we must either localize $\theta_i$ by specifying which of $\{1,\dots,L_i\}$ it belongs to, or refrain from making any claim at all about $\theta_i$ (the latter reflecting the decision ``not enough evidence to decide"). 
The corresponding natural notion of error for a given procedure is a {\it false localization rate} (FLR),
\[
\textnormal{FLR} := \EE{\frac{\# \text{false localizations}}{\# \text{localizations made}}}.
\]
As we will see below, the false localization rate generalizes the false discovery rate. 

\begin{definition}[LORD-CI for localization]\label{def:lord:ci:fod} 
Let $\I_i: \X_i \times [0,1] \to 2^{\Theta_i}$
be an arbitrary pre-specified monotone marginal CI rule for $i=1,2,\dots$, and define $\S_i$ as follows: 
$$
S_i = 
\begin{cases}
1, & \text{if $I_i = \I_i(X_i, \alpha_i)$ is a subset of exactly one of $K_{i1},\dots,K_{iL_i}$}\\
0, & \text{otherwise}
\end{cases}.
$$
Then {\it LORD-CI for localization} is the online CI protocol that applies LORD-CI to the above selection rule, and when $S_i=1$, it outputs the unique index $j \in \{1,\dots,L_i\}$ such that $I_i \subseteq K_{ij}$.
\end{definition}

\noindent The above procedure comes with the following guarantee.

\begin{theorem}
The LORD-CI for localization procedure (Definition \ref{def:lord:ci:fod}) satisfies 
$\textnormal{FLR}(T)\leq \alpha$ for any $T$. 
\end{theorem}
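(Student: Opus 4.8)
The plan is to reduce this theorem to an instance of the already-established FCR guarantee for monotone selection rules, namely Theorem~\ref{thm:lord:ci:monotone}. The key observation is that a false localization is, by construction, nothing other than a false coverage event. Indeed, when $S_i=1$, the procedure has determined that $I_i \subseteq K_{ij}$ for a unique $j$, and it outputs this index. The localization is \emph{correct} exactly when $\theta_i \in K_{ij}$, which (since the $K_{il}$ are disjoint and $I_i$ lies inside $K_{ij}$) holds precisely when $\theta_i \in I_i$; conversely, the localization is \emph{false} exactly when $\theta_i \notin K_{ij}$, and because $I_i \subseteq K_{ij}$ this forces $\theta_i \notin I_i$. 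Thus the false localization indicator coincides with $V_i = S_i \one{\theta_i \notin I_i}$, and the number of localizations made equals $\sum_{i\le T} S_i$, so $\textnormal{FLR}(T) = \FCR(T)$ for this protocol.

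First I would make this identification precise: show that on the event $\{S_i=1\}$ the localization at step $i$ is false if and only if $\theta_i \notin I_i$. This is the short logical core of the argument and rests only on disjointness of the $K_{il}$ together with the containment $I_i \subseteq K_{ij}$. From this the equality of the two ratios (hence of their expectations) is immediate, giving $\textnormal{FLR}(T)=\FCR(T)$.

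Second, I would verify that the selection rule $\S_i$ defined in Definition~\ref{def:lord:ci:fod} is both predictable and monotone, since Theorem~\ref{thm:lord:ci:monotone} requires monotone selection rules to conclude $\FCR(T)\le\alpha$. Predictability holds because $I_i=\I_i(X_i,\alpha_i)$ with $\I_i,\alpha_i \in \F^{i-1}$, so the event ``$I_i$ is a subset of exactly one $K_{il}$'' is a predictable rule evaluated at $X_i$. Monotonicity is where I expect the only real subtlety to lie: I must confirm that if the history $(S_1,\dots,S_{i-1})\succeq(\widetilde S_1,\dots,\widetilde S_{i-1})$ yields a larger $\alpha_i \ge \widetilde\alpha_i$, then $S_i \ge \widetilde S_i$ for the same $X_i$. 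Here I would use that the CI rule $\I_i$ is monotone (assumed throughout this section), so $\I_i(X_i,\alpha_i)\subseteq \I_i(X_i,\widetilde\alpha_i)$: a \emph{smaller} confidence interval is more likely to fit inside a single $K_{il}$, so if the larger interval $\widetilde I_i$ already localizes (fits in exactly one set), the smaller interval $I_i$ does too, giving $S_i \ge \widetilde S_i$. I would combine this with the fact, noted in the paper, that the LORD-CI algorithm is itself monotone, so that $(S_1,\dots,S_{i-1})\succeq(\widetilde S_1,\dots,\widetilde S_{i-1})$ indeed implies $\alpha_i \ge \widetilde\alpha_i$.

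The main obstacle is precisely this monotonicity verification: one must chase through the interaction between ``more past selections $\Rightarrow$ larger $\alpha_i$ $\Rightarrow$ \emph{narrower} interval $\Rightarrow$ easier to localize,'' making sure the direction of each implication is correct. A mild edge case to address is when $I_i$ fits inside exactly one $K_{il}$ for the wider interval but this count could in principle change for the narrower one; since a subset of a set that lies in exactly one $K_{il}$ still lies in that same unique $K_{il}$ (and cannot suddenly fall into a different or additional disjoint set), the ``exactly one'' condition is preserved under shrinking, so monotonicity goes through. Once predictability and monotonicity are established, Theorem~\ref{thm:lord:ci:monotone} applies directly to give $\FCR(T)\le\alpha$, and the identification $\textnormal{FLR}(T)=\FCR(T)$ completes the proof.
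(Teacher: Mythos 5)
Your overall strategy is exactly the paper's: observe that the localization selection rule is predictable (since $\I_i$ and $\alpha_i$ are), argue it is monotone because a larger history gives a larger $\alpha_i$, hence (by monotonicity of $\I_i$) a \emph{narrower} candidate interval that is at least as likely to fit inside a single $K_{il}$, invoke Theorem~\ref{thm:lord:ci:monotone} to get $\FCR(T)\le\alpha$, and finally pass from coverage to localization. The one error is your claimed equivalence: you assert that on $\{S_i=1\}$ the localization is correct ``precisely when'' $\theta_i\in I_i$, and hence that $\textnormal{FLR}(T)=\FCR(T)$. That is false in one direction: since $I_i$ may be a proper subset of $K_{ij}$, one can have $\theta_i\in K_{ij}\setminus I_i$, i.e.\ a correct localization together with a miscoverage, so the false localization indicator is only bounded above by $V_i=S_i\one{\theta_i\notin I_i}$, not equal to it. The paper states exactly this one-sided implication (``a false localization event implies a false coverage event, but not necessarily the other way around''), giving $\textnormal{FLR}(T)\le\FCR(T)\le\alpha$. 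Since only this inequality is needed, and the correct half of your argument (false localization $\Rightarrow$ $\theta_i\notin K_{ij}$ $\Rightarrow$ $\theta_i\notin I_i$) already delivers it, the proof goes through once the spurious equality is dropped.
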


\begin{proof}
Note that the selection rule in Definition \ref{def:lord:ci:fod} can be rewritten as
\begin{equation}\label{eq:lord:ci:fod:sel}
\S_i(X_i,I_i) = 1 \iff X_i \in \{x: \I_i(x,\alpha_i) \subseteq K_{il} \text{ for some $l$} \},
\end{equation}
which defines a predictable selection rule because $\alpha_i$ are predictable. 
Thus, the procedure in Definition \ref{def:lord:ci:fod} is LORD-CI for a predictable selection rule. 
Because the CI rules $\I_i$ are monotone, and the $\alpha_i$ output by the LORD-CI algorithm are also monotone by construction, we conclude that the {\it selection rule} \eqref{eq:lord:ci:fod:sel} is also monotone according to condition \eqref{eq:monotone}. 
Hence, the procedure in Definition \ref{def:lord:ci:fod} is now the LORD-CI procedure for a predictable {\it and monotone} selection rule, which controls the FCR by Theorem \ref{thm:lord:ci:monotone}. 
The last step is to observe that a false localization event implies a false coverage event (but not necessarily the other way around), and hence $\textnormal{FLR}(T) \leq \FCR(T) \leq \alpha$.
\end{proof}

Next we consider some special cases of localization and their implications. 

\subsection{Online composite hypothesis testing with FDR control}\label{subsec:FDR}
Suppose that we have a sequence of composite null hypotheses that we wish to test:
$$
H_{0}^i: \theta_i \in \Theta_{0i},\ \ \ \ \ \ \ i=1,2,\dots,
$$
where $\Theta_{0i}\subseteq \Theta_i$. 
For any online testing procedure, let $R_i := \mathbbm{1}(H_0^i \text{ is rejected})$ and define 
$$
\FDR(T) := \EE{\frac{ \#\{i\leq T: R_i = 1, \theta_i\in \Theta_{0i}\} }{ \#\{i\leq T: R_i = 1\} }},
$$
which reduces to the usual definition of the FDR when $\Theta_i$ include a single value, i.e., when testing point null hypotheses. 
We can use the procedure of Definition \ref{def:lord:ci:fod} to devise an online {\it testing} protocol that controls the FDR. 
\begin{definition}[LORD-CI for composite testing] \label{proc:comp-test} 
Consider an arbitrary marginal CI rule $\I_i$ for each parameter $\theta_i$. 
We reject the $i$th composite null hypothesis and set $R_i = 1$, if and only if 
\begin{equation}
\label{eq:composite}
\I_i(X_i,\alpha_i) \cap \Theta_{0i} = \emptyset,
\end{equation}
where $\alpha_i$ is determined by the LORD-CI procedure using $S_i=R_i$.
\end{definition}
\noindent The procedure in the definition above comes with the following guarantee. 

\begin{corollary}
The LORD-CI procedure for composite testing (Definition \ref{proc:comp-test}) enjoys $\FDR(T)\leq \alpha$ for any $T$. 
\end{corollary}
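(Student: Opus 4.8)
The plan is to derive this corollary as an immediate specialization of the localization guarantee accompanying Definition~\ref{def:lord:ci:fod}, rather than re-running the transfer argument from scratch. Concretely, I would instantiate the localization setup with a single target set per time step, taking $L_i = 1$ and $K_{i1} := \Theta_i \setminus \Theta_{0i}$, the complement of the null region inside $\Theta_i$. Provided $\Theta_{0i}$ is measurable (so that $K_{i1} \in 2^{\Theta_i}$), this is a legitimate collection of pre-specified disjoint subsets, and the induced selection rule of Definition~\ref{def:lord:ci:fod} reads $S_i = 1 \iff I_i \subseteq K_{i1} \iff \I_i(X_i,\alpha_i) \cap \Theta_{0i} = \emptyset$; since $L_i = 1$, the requirement that $I_i$ be a subset of exactly one $K_{il}$ is automatic. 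This is precisely the rejection rule \eqref{eq:composite} with $S_i = R_i$, so LORD-CI for composite testing is literally LORD-CI for localization with this choice of sets.

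The second step is to check that the two error notions coincide identically as random ratios, not merely up to an inequality. A localization at time $i$ is made exactly when $S_i = R_i = 1$, and it is false exactly when the localized-to set fails to contain $\theta_i$, i.e. when $\theta_i \notin K_{i1}$, which by construction is equivalent to $\theta_i \in \Theta_{0i}$. Hence the false-localization indicator at $i$ equals $R_i \one{\theta_i \in \Theta_{0i}}$, and the number of localizations equals $\sum_{i\le T} R_i$; comparing with the definition of $\FDR(T)$ shows that the ratio inside the expectation defining $\textnormal{FLR}(T)$ is pointwise equal to the ratio defining $\FDR(T)$. Therefore $\textnormal{FLR}(T) = \FDR(T)$ for every $T$, and invoking the localization theorem yields $\FDR(T) = \textnormal{FLR}(T) \le \alpha$.

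The remaining points are bookkeeping rather than genuine obstacles. I would note that the CI rule $\I_i$ is monotone (assumed throughout this section), so the induced selection rule \eqref{eq:lord:ci:fod:sel} is monotone and Theorem~\ref{thm:lord:ci:monotone} applies; this is exactly what the localization theorem already uses, so it is inherited for free. I would also record the one-directional containment that drives everything: a false discovery forces $\theta_i \in \Theta_{0i}$ while $I_i \cap \Theta_{0i} = \emptyset$, hence $\theta_i \notin I_i$, so every false discovery is in particular a false coverage event---this is the implication ``false localization $\Rightarrow$ false coverage'' specialized to a single target set. I expect no hard step; the only subtlety is to state the equivalence $\theta_i \notin K_{i1} \iff \theta_i \in \Theta_{0i}$ carefully, so that $\textnormal{FLR}$ and $\FDR$ are matched exactly and the corollary follows without any loss.
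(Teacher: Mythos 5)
Your proposal is correct and follows essentially the same route as the paper: both specialize the localization procedure with $L_i=1$ and $K_{i1}=\Theta_i\setminus\Theta_{0i}$, identify $R_i=S_i$, and conclude from the FCR guarantee for monotone selection rules. The only cosmetic difference is that you note the exact identity $\textnormal{FLR}(T)=\FDR(T)$ and invoke the localization theorem, whereas the paper bounds $\FDR(T)\leq\FCR(T)$ directly; these are the same argument.
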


\begin{proof}
Specialize the prescription in Definition \ref{def:lord:ci:fod} by taking $L_i=1$ and $K_i = \Theta_i \setminus \Theta_{0i}$. 
Then, $S_i = 1$ if and only if condition \eqref{eq:composite} holds, meaning that $R_i=S_i$ for all $i$.  The use of LORD-CI guarantees that we have $\FCR(T)\leq q$ for any $T$. 
Last, we have that $\FDR(T)\leq \FCR(T)$ simply because a false discovery implies necessarily that a non-covering CI was constructed. 
\end{proof}

%

Before proceeding, we would like to point out a connection to existing online FDR testing protocols. 
We can define a p-value for testing $H_{0}^i$ by 
$$
P_i := \sup\{\alpha: \I_i(X_i,\alpha) \cap \Theta_{0i}\neq \emptyset \},
$$
where $\I_i$ is any monotone CI for $\theta_i$. 
Indeed, if $\theta_i\in \Theta_{0i}$, then for any $t\in [0,1]$ we have
$$
\textnormal{Pr}_{\theta_i}\!\left\{ P_i\leq t \right\} \leq \textnormal{Pr}_{\theta_i}\!\left\{ \I_i(X_i,t) \cap \Theta_{0i} = \emptyset \right\} \leq \textnormal{Pr}_{\theta_i}\!\left\{ \theta_i \notin \I_i(X_i,t) \right\} \leq t.
$$
We can therefore apply an existing online FDR protocol using this definition for a p-value. 
Note that while the computation of the p-value above might not be trivial, we are really only required at each step to check if $P_i\leq \alpha_i$, which is equivalent to rejecting when $\I_i(X_i,t) \cap \Theta_{0i} = \emptyset$. 
In fact, if we use the same CI rules and the same algorithm to set the $\alpha_i$ as in the CI procedure employed in Definition \ref{proc:comp-test}, we obtain exactly the composite testing procedure of Definition \ref{proc:comp-test}.

\subsection{Online sign-classification with FSR control}\label{subsec:FSR}
Sometimes we would like to ask about the {\it direction} of the effect rather than test a two-sided null hypothesis. 
As argued in \citet{gelman2012we, gelman2000type}, this often makes a more sensible question than asking whether a parameter is {\it equal} to zero. In fact, even statisticians that use a two-sided test of a point null hypothesis, tend to supplement---perhaps with a leap of faith---a rejection of the null with a claim about the sign of the parameter \citep[][call this {\it post hoc} inference of the sign]{goeman2010three}.
Inferring the signs of multiple parameters simultaneously was considered at least as early as \citet{bohrer1979multiple, bohrer1980optimal, hochberg1986multiple}. 
In the story of Section~\ref{sec:intro}, the management might be interested primarily in identifying which drugs have a positive treatment effect and which drugs have a nonpositive treatment effect. 
Throughout this subsection suppose that $\Theta_i\equiv \Theta \subseteq \R$, $\X_i\equiv \X$ and that $X_i\sim f(x_i;\theta_i)$ for some common likelihood function $f:  \X \times \Theta \to [0,\infty)$, so that a common CI rule can be used at all times\footnote{Note that a CI rule depends on the likelihood function only, not on the true value of $\theta_i$}; for lack of a better phrase, we call this situation the ``common likelihood'' case. 

%
%

When considering a sign-classification procedure, we will aim to control---in analogy to the FDR---the expected ratio of number of incorrect directional decisions to the total number of directional decisions made. 
Throughout this paper, to make a directional decision means to classify $\theta_i>0$ (positive) or $\theta_i\leq 0$ (non-positive); because zero is included on one side, this can be considered a {\it weak} sign-classification (although the definition is not symmetric, zero can be just as well appended to the positive side instead of the negative side). 
Hence, a sign-classification protocol is an online procedure that outputs
$$
D_i = 
\begin{cases}
1,& \text{if $\theta_i$ classified as positive}\\
-1,& \text{if $\theta_i$ classified as non-positive}\\
0,& \text{if no decision on the sign of $\theta_i$ is made}
\end{cases}.
$$
Borrowing a term from \citet{stephens2016false}, we define the {\it false sign rate} as 
\[
\FSR(T) := \EE{ \frac{ \#\{i\leq T: \theta_i\leq 0, D_i=1\} + \#\{i\leq T: \theta_i>0, D_i=-1\} }{\#\{i\leq T: D_i=1\} + \#\{i\leq T: D_i=-1\}} }.
\]
It is worth noting that this is slightly different from the definitions of \citet{benjamini1993false}, who consider procedures that classify parameters as strictly positive or as strictly negative; however, if there are no parameters that equal zero exactly, then all definitions coincide \citep[which is the case in virtually all realistic situations, see, e.g.,][]{tukey1991philosophy}. 
A natural procedure to consider is applying any online FDR protocol to test the hypotheses that $\theta_i=0$, and then classify each rejection according to the sign of an unbiased estimate of $\theta_i$ (so, for example, when $\theta_i = \EE{X_i}$, a rejected null with $X_i>0$ entails $D_i = 1$). 
We will see later that, for example, applying LORD++ to the usual two-sided $p$-values indeed works, however FSR control is not {\it automatically} guaranteed, i.e., this requires a proof \citep[see][who point out caveats in replacing rejections with statements about the signs of the parameters]{gelman2000type}. 

\smallskip
We can rely again on the procedure of Definition \ref{def:lord:ci:fod} to devise a sign-classification protocol that controls the FSR. 
Thus, suppose that we have an arbitrary (common) marginal CI rule $\I(\cdot,\cdot)$. 
Now specialize the prescription in Definition \ref{def:lord:ci:fod} by taking $L_i\equiv 2$, $K_{i1}\equiv (-\infty,0], K_{i2} \equiv (0,\infty)$. 
In words, this is the LORD-CI procedure that reports $I_i=\I(X_i,\alpha_i)$ whenever it includes either only positive or only non-positive values. 
This special case of the procedure in Definition \ref{def:lord:ci:fod} is central enough to merit a separate definition. 

\begin{definition}[Sign-determining LORD-CI procedure]\label{def:sd:lord:ci}
Suppose that we are in the ``common likelihood" case, and let $\I: \X \times [0,1] \to 2^{\Theta}$ be any marginal confidence interval procedure, i.e., $\textnormal{Pr}_{\theta_i}{\left\{\theta_i\notin \I(X_i,a)\right\}}\leq a$ for any $a\in [0,1]$. 
Assume that for any parameter $\theta_i$ there is a corresponding ``null" value $\theta_{0i}\in \Theta$. 
The {\it sign-determining} LORD-CI procedure associated with $\I$ is defined to be the LORD-CI procedure that utilizes the  selection rules 
\begin{equation}\label{eq:sd-rule}
S_i = 
\begin{cases}
1, \ &\text{if } \{\tau - \theta_{0i}: \tau\in \I(x,\alpha_i)\}\subseteq (0,\infty)\ \ \ \textnormal{or}\ \ \ \{\tau - \theta_{0i}: \tau\in \I(x,\alpha_i)\}\subseteq (-\infty,0]\\
0, \ &\textnormal{otherwise}
\end{cases},
\end{equation}
and constructs $I_i = \I(X_i,\alpha_i)$ if $S_i=1$. 
\end{definition}
For simplicity, assume from now on that $\theta_{0i}\equiv 0$. 
In that case the sign-determining LORD-CI procedure constructs $I_i = \I(X_i,\alpha_i)$ if and only if this interval is  {\it sign-determining}, meaning that it includes only positive or only nonpositive values. 

\smallskip
Returning to the FSR problem, apply now the CI procedure from Definition \ref{def:sd:lord:ci} with an arbitrary choice of $\I$, and set 
\begin{equation}
\label{eq:sign:classification}
D_i = 
\begin{cases}
1,& \text{if $S_i = 1$ and $I_i \subseteq (0,\infty)$}\\
-1,& \text{if $S_i = 1$ and $I_i \subseteq (-\infty,0]$}\\
0,& \text{if $S_i = 0$}
\end{cases}.
\end{equation}
Then we have the following result:

\begin{corollary}
The sign-classification procedure given by \eqref{eq:sign:classification}, enjoys $\FSR(T)\leq \alpha$ for any $T$. 
\end{corollary}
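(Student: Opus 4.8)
The plan is to recognize that the sign-classification procedure is simply an instance of LORD-CI for localization (Definition \ref{def:lord:ci:fod}), and that under this identification the false sign rate coincides \emph{exactly} with the already-controlled false localization rate. I would begin by specializing Definition \ref{def:lord:ci:fod} with $L_i \equiv 2$, $K_{i1} = (-\infty,0]$ and $K_{i2} = (0,\infty)$. These two sets are disjoint, and a candidate CI $I_i = \I(X_i,\alpha_i)$ is a subset of exactly one of them precisely when it is sign-determining; thus this specialized selection rule is identical to the rule \eqref{eq:sd-rule} defining the sign-determining LORD-CI procedure (with $\theta_{0i}\equiv 0$). In particular the selection decision $S_i$ is the same in both descriptions.

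Next I would match the directional decision $D_i$ of \eqref{eq:sign:classification} with the localization output: $D_i = 1$ corresponds to localizing $\theta_i$ to $K_{i2} = (0,\infty)$, $D_i = -1$ corresponds to localizing to $K_{i1} = (-\infty,0]$, and $D_i = 0$ corresponds to $S_i = 0$, i.e.\ no localization. With this correspondence the bookkeeping becomes transparent. The FSR denominator $\#\{i\leq T: D_i = 1\} + \#\{i\leq T: D_i = -1\}$ equals $\sum_{i\leq T} S_i$, the total number of localizations made, since $D_i \neq 0 \iff S_i = 1$. For the numerator, a localization to $K_{i2}$ is false iff $\theta_i \notin (0,\infty)$, i.e.\ $\theta_i \leq 0$, which is exactly the event $\{\theta_i \leq 0, D_i = 1\}$; symmetrically a localization to $K_{i1}$ is false iff $\theta_i > 0$, which is exactly $\{\theta_i > 0, D_i = -1\}$. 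Because these exhaust the ways a localization can err, the FSR numerator equals the number of false localizations. Hence $\FSR(T) = \textnormal{FLR}(T)$ identically, and the localization theorem yields $\textnormal{FLR}(T) \leq \alpha$, which finishes the proof.

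If one preferred a direct argument bypassing the FLR reduction, the key step would be to observe that every false sign event forces a false coverage event: if $\theta_i \leq 0$ yet $D_i = 1$, then $I_i \subseteq (0,\infty)$ by construction, so $\theta_i \notin I_i$ and $V_i = 1$ (and symmetrically in the other case). This bounds the FSR numerator by $\sum_{i\leq T} V_i$ while the denominators agree, giving $\FSR(T) \leq \FCR(T)$; one would then invoke Theorem \ref{thm:lord:ci:monotone}, after noting the sign-determining selection rule is monotone because $\I$ is monotone and the LORD-CI levels $\alpha_i$ are monotone.

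The only point requiring care — and the reason the result is an exact identity rather than a mere inequality — is the combinatorial matching that pins down $\FSR = \textnormal{FLR}$. One must confirm both that $D_i \neq 0 \iff S_i = 1$ (so the two denominators coincide) and that the disjointness of $K_{i1}, K_{i2}$ together with the weak-classification convention ($\theta_i \leq 0$ versus $\theta_i > 0$) makes the two false-sign events cover precisely the false localizations with no double counting. Once this correspondence is established, the conclusion follows immediately from the localization guarantee already proved in the excerpt.
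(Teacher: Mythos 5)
Your proposal is correct and follows essentially the same route as the paper: the paper also specializes the localization procedure to $K_{i1}=(-\infty,0]$, $K_{i2}=(0,\infty)$ and bounds $\FSR(T)\leq\FCR(T)$ by noting that a wrong sign decision forces a non-covering CI, with FCR control inherited from the monotonicity of $\I$ via Definition~\ref{def:lord:ci:fod}. Your observation that $\FSR(T)$ in fact \emph{equals} the false localization rate here (since $K_{i1}$ and $K_{i2}$ partition $\R$) is a mild sharpening of the paper's inequality, but it rests on exactly the same machinery.
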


\begin{proof}
We have $\FSR(T)\leq \FCR(T)$ because a wrong decision on the sign of a parameter necessarily implies that a non-covering CI was constructed. 
Here, the left hand side of the inequality is the false sign rate associated with the sign-classification procedure in \eqref{eq:sign:classification}, and the right hand side is the false coverage rate associated with the sign-determining LORD-CI procedure of Definition \ref{def:sd:lord:ci} (using $\theta_{0i}\equiv 0$). 
On the other hand, $\FCR(T)$ is controlled as a special case of the procedure of Definition \ref{def:lord:ci:fod}, because we assumed that $\I$ is monotone. 
\end{proof}


\begin{remark}
It is easy to see that we could drop the assumption on monotonicity of the CI rules in this section and still be guaranteed control of the respective modified error rates, for example $\mFDR$ in Subsection \ref{subsec:FDR} and $\mFSR$ in Subsection \ref{subsec:FSR} (which would now be implied by $\mFCR$ control for the corresponding CI procedure). 
\end{remark}

\subsection{Configuring the sign-determining LORD-CI procedure}\label{subsec:sd:lord:ci}
The sign-determining LORD-CI procedure was used in the previous subsection as a ``wrapper'' device to control the FSR for any definition of the CI rules $\I_i$, but it can be of interest to design specific $\I_i$ rules since we know that the FSR procedure will only select sign-determining CIs. 
Indeed, in most realistic situations, it is useful to supplement a directional decision with confidence bounds that are consistent with that decision. 
For example, if the team of statisticians declare a specific drug to have a positive effect, the management will likely want to know how large the effect is at least, as would be quantified by a nonnegative lower endpoint of a CI. 

\smallskip
Thus, ideally, the sign-determining LORD-CI procedure selects and constructs a large number of CIs---meaning that it is ``powerful'' when translated to an FSR protocol as in Section \ref{subsec:FSR}---while the lower endpoint for a positive interval is as far away from zero as possible (and, similarly, the upper endpoint for a nonpositive interval is as far away from zero as possible). 
Unfortunately, these two goals are conflicting in general; see \citet{benjamini1998confidence}, who study the single-parameter case. 
The tradeoff between these two properties will be controlled here through the choice of the marginal CI rule $\I_i$; thus, the corresponding sign-determining LORD-CI procedure may be seen as the online counterpart of the offline sign-determining multiple testing procedure of \citet{weinstein2014selective}. 
Below, we point out a few concrete examples of CI rules $\I_i$. 
For the rest of this section assume that $X_i\sim N(\theta_i,1)$, though the constructions below can be extended beyond the normal case. 


%

\begin{enumerate}
\item $\I$ is the usual symmetric interval, $\I(x,\alpha) = (x-z_{\alpha/2}, x+z_{\alpha/2})$. 

It can be easily verified that the sign-determining LORD-CI procedure with this choice for $\I$, selects exactly the set of parameters rejected by the LORD++ online FDR procedure using the usual two-sided $p$-values
$$
P_i = 2(1-\Phi(|X_i|)). 
$$
A constructed CI has length $2z_{\alpha_i/2}$, and is guaranteed to be sign-determining. 
As a byproduct, if we translate this into a sign-classification procedure as explained in Section \ref{subsec:FSR}, we have as a conclusion that selecting with LORD++ and classifying according to the sign of $X_i$, controls the FSR. 
In fact, this conclusion still holds if $D_i=-1$ is interpreted as declaring that $\theta_i$ is {\it strictly} negative rather than non-positive, because the usual symmetric interval is open. 

\item $\I$ is the ``one-sided" interval given by 
$$
\I(x,\alpha) = 
\begin{cases}
(x-z_\alpha,x+z_\alpha), & \text{if $0<|x|<z_\alpha$}\\
(0,x+z_\alpha), & \text{if $x>z_\alpha$}\\
(x-z_\alpha,0], & \text{if $x<-z_\alpha$}
\end{cases}.
$$ 
It can be verified that the sign-determining LORD-CI procedure with this choice for $\I$, selects exactly the set of parameters rejected by the LORD++ online FDR procedure using ``one-sided" $p$-values
$$
P_i = 1-\Phi(|X_i|),
$$
hence is much more powerful when translated into a sign-classification procedure. 
However, constructed intervals do not have an a priori bound on their length, since they take the form $(0,X_i+z_{\alpha_i})$ (if the observation is positive) or $(X_i-z_{\alpha_i},0]$ (if the observation is negative). 
Perhaps more seriously, a reported interval necessarily touches zero, thus failing to address our follow-up question on how big the effect is at least (a nonpositive interval even {\it includes} zero).

\item $\I$ is the Modified Quasi-Conventional (MQC) CI of \citet{weinstein2014selective}. 

The MQC confidence interval is in a sense a compromise between the two choices of $\I$ presented above: it determines the sign earlier than the two-sided interval but not as early as the ``one-sided" interval.
In turn, it leads to more power than LORD++ applied to two-sided $p$-values when interpreted as a sign-classification procedure, and at the same time separates from zero for large enough $x$. 
A mathematical definition of the MQC interval is given in \citet{weinstein2014selective}, where its properties are further explained; we include a figure instead to illustrate the properties of that interval. 
In Figure \ref{fig:mqc} the endpoints of the MQC interval $\I(x,\alpha)$ are shown in solid lines as a function of the observation $x$ for $\alpha=0.1$.
The potential gain in power due to using the MQC interval instead of the symmetric interval, is demonstrated in Section \ref{sec:experiments}.

\end{enumerate}

\begin{figure}[H]
  \centering
    \includegraphics[width=.65\textwidth]{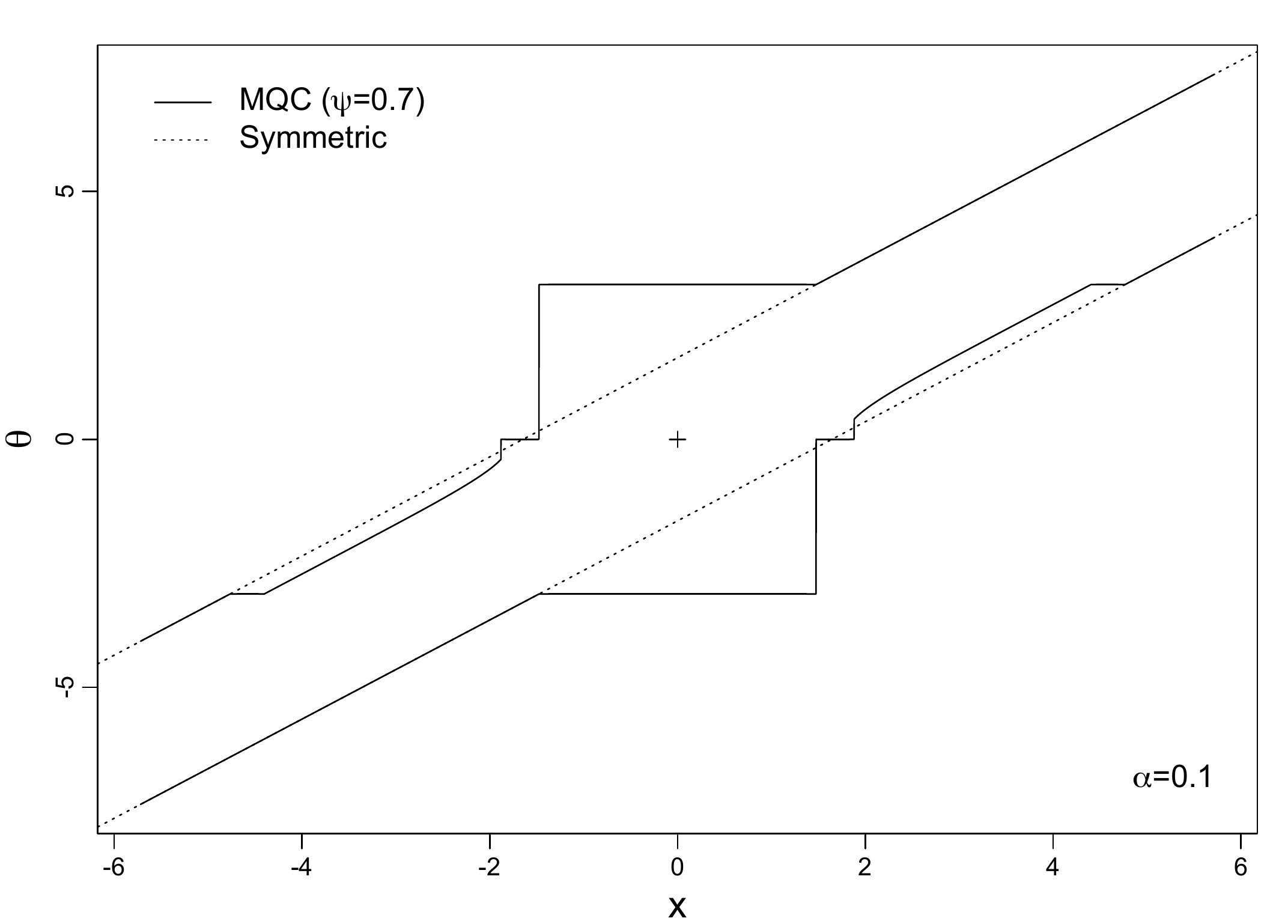}
  \caption{The Modified Quasi-Conventional (MQC) CI rule. 
  For each observation $x$, the two solid lines show the lower and upper endpoints of the corresponding MQC CI that would be constructed for $\theta$. Dotted lines are lower and upper endpoints of the usual two-sided CI. It can be seen that the MQC interval excludes values of opposite signs earlier, that is at a smaller $x$ value, than the usual two-sided CI. (The parameter $\psi\in (0.5,1)$ controls how early sign-determination occurs, and here we used $\psi=0.7$.) The unusual constant shape of the MQC in the neighborhood of zero does not matter because intervals that cross zero are anyway discarded in a sign-determining selective-CI procedure; see the discussion in \citet{weinstein2014selective} which applies also for the online case treated here.}
  \label{fig:mqc}
\end{figure}

\section{Numerical experiments} \label{sec:experiments}

\subsection{Simulations} \label{subsec:simulations}

To examine how the LORD-CI procedure compares to conditional CIs, we carry out numerical experiments where online confidence intervals are constructed under different (predictable) selection schemes. 
We set $\alpha=0.1$ and in each of $N=10,000$ simulation runs, we draw $m=10,000$ parameters i.i.d.~from a mixture
$$
\theta_i=
\begin{cases}
10^{-3},&\text{w.p. }0.45\\
-10^{-3},&\text{w.p. }0.45\\
1 + W_i,&\text{w.p. }0.1\\
\end{cases},
$$
where $W_i\sim \text{Pois}(1)$. 
The mass at $\pm 0.001$ represents the ``null" component (essentially zero), while the ``nonnulls" are drawn so that large effects are rare. 
The observations are then drawn as $X_i \sim N(\theta_i,1)$. 
The $X_i$ are revealed one by one, and a confidence interval is to be quoted whenever a parameter is selected. 
The LORD-CI procedure uses the sequence of $\alpha_i$ specified by the LORD++ procedure \citep{RYWJ17} with ``default" choices $W_0 = \alpha/2$ and $\gamma(j) = 0.0722 \frac{\log(j\vee 2)}{je^{\sqrt{\log j}}}$, as used in the experiments of \cite{javanmard2016online,RYWJ17}. 
{\it If not indicated otherwise}, the marginal CI used for LORD-CI is the symmetric two-sided interval, and the conditional CI used is the construction from \citet[][Section 2]{weinstein2013selection} obtained by inverting shortest acceptance regions. 
Table \ref{tab:simulations} gives quantitative summary statistics (averaged over the $N$ replications) for the three simulation examples. 
Below, we examine the output from a single realization of the experiment for each of the examples. 

\smallskip
We begin with a simple selection rule, where a CI is constructed when $|X_i|>3$, i.e., when the size of the current observation exceeds a fixed threshold. 
Figure \ref{fig:simulation:threshold} shows conditional CIs (red) versus LORD-CI intervals (black) for a single realization. 
The conditional CIs are considerably shorter than LORD-CI, which seems to be conservative with $\FCP = 0.043$ as compared to about 0.1 for conditional. 
In particular, the conditional CIs become closer to the marginal two-sided $90\%$ interval as the observation size increases, which would in that sense resemble Bayesian credible intervals for our example (these are not shown in the figure). 
Both the conditional and LORD-CI intervals may cross zero, as can be seen in the plot. 
In fact, as many as 53\% of the conditional CIs cross zero, and 38\% of LORD-CI intervals cross zero. 
Note that the lower endpoint of the CI is monotone non-decreasing for the conditional intervals, but not for LORD-CI. 
The conditional intervals seem preferable in this situation. 

\begin{figure}
  \centering
    \includegraphics[width=.65\textwidth]{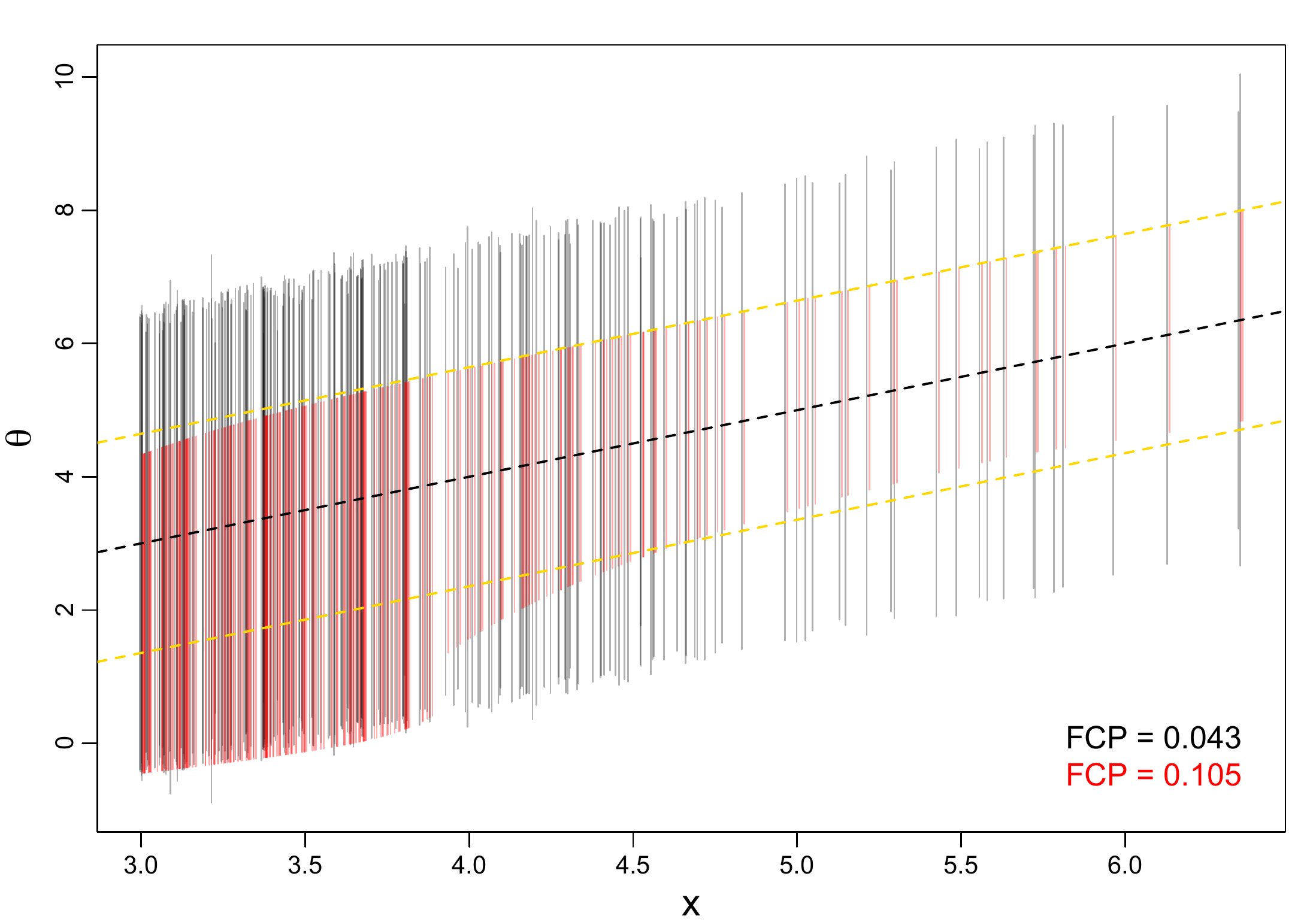}
  \caption{Conditional Vs. LORD-CI intervals for selection above a fixed threshold. 
  The figure shows constructed intervals as vertical segments: conditional CIs in red, LORD-CI in black. 
  Black dashed line is the identity line, and the yellow dashed lines represent the endpoints of the marginal (unadjusted) $90\%$ interval. 
  }
  \label{fig:simulation:threshold}
\end{figure}

\smallskip
The second simulation example illustrates a situation where we are interested first in detecting the sign of the parameters, and second in supplementing a directional decision with confidence bounds. 
For this we implement the sign-determining LORD-CI procedure of Section \ref{sec:localization}, in other words, $\theta_i$ is selected whenever the candidate (symmetric) LORD-CI interval excludes zero. 
Because this amounts to selecting $\theta_i$ when $|X_i|>\Phi^{-1}(1-\alpha_i/2)$, and because $X_i$ are independent and $\alpha_i$ predictable, the conditional distribution of $X_i$ given $S_i=1$ and $S_1,\dots,S_{i-1}$, is that of a truncated normal, and we can use again the intervals of \citet{weinstein2013selection} (the cutoff will now be different for every selection, as opposed to the previous example where it was always at $3$). 
As the conditional intervals converge to the marginal two-sided CI for observation values much larger than the threshold, they are significantly shorter than LORD-CI for these instances. 
However, the conditional intervals may cross zero, whereas the LORD-CI intervals are (by design) guaranteed to be sign-determining. 
In fact, more than half (51.2\%) of the conditional CIs include both positive and negative values; as mentioned in the introduction, in many applications it would be desirable to report intervals that do not cross zero, and in that case LORD-CI has a clear advantage. 
In practice, one might be tempted to ignore the conditional CIs that include zero; of course, with this strategy we lose FCR control---in our example, the FCP would increase to $0.2$ if we kept only the intervals that do not include zero (see Section \ref{subsec:inconsistency} below for further discussion). 

\begin{figure}
  \centering
    \includegraphics[width=.85\textwidth]{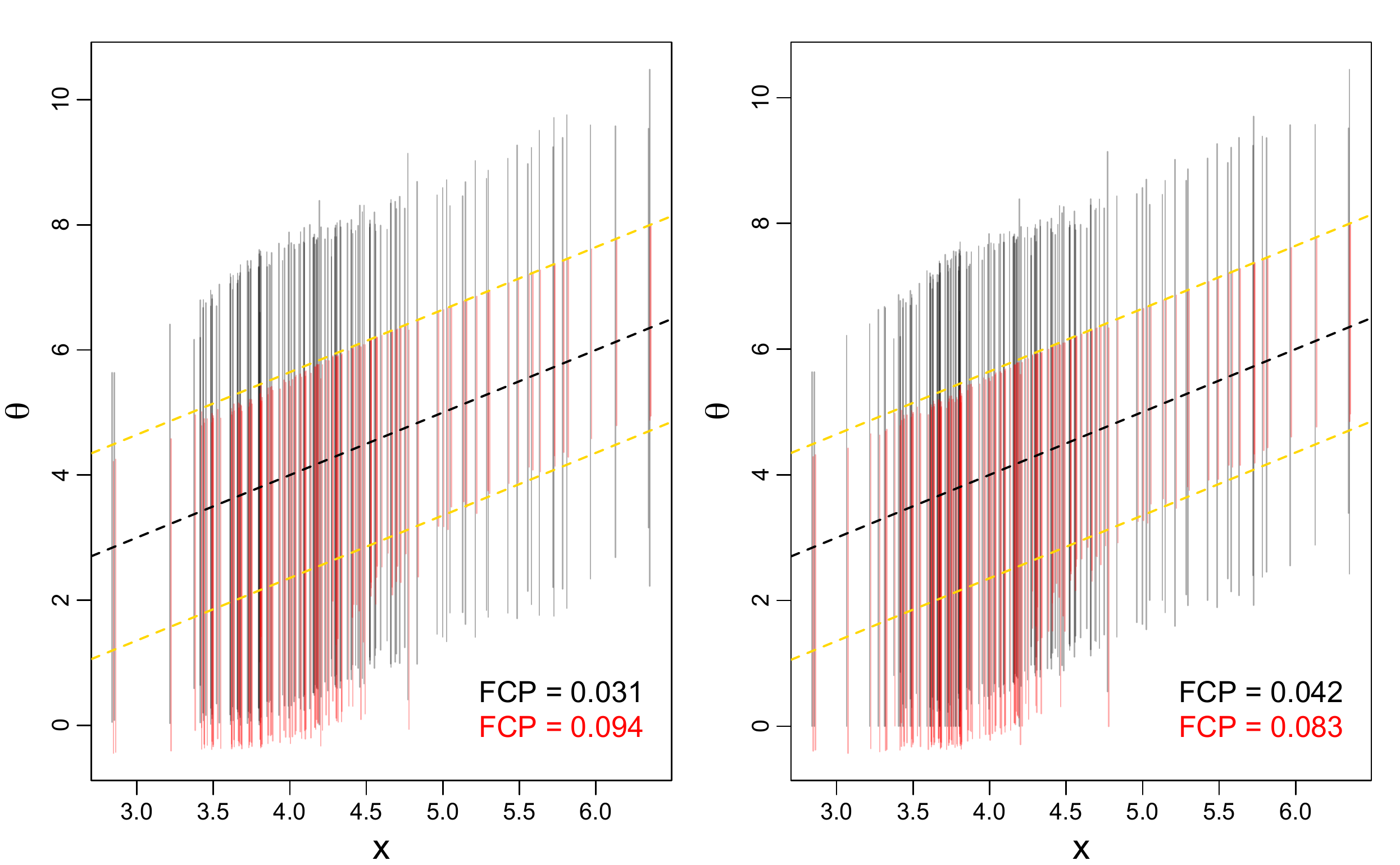}
  \caption{Conditional Vs. LORD-CI intervals for selection by the sign-determining LORD-CI procedure. 
  The figure shows constructed intervals as vertical segments: conditional CIs in red, LORD-CI in black. 
  {\it Left}: selection utilizes the usual symmetric CI. 
  {\it Right}: selection utilizes the MQC interval. 
  LORD-CI intervals are sign-determining (by design), while conditional CIs may cross zero. 
  More parameters are selected in the right panel (higher power). 
  Black dashed line is the identity line, and the yellow dashed lines represent the endpoints of the marginal (unadjusted) $90\%$ interval. 
  }
  \label{fig:simulation:lordci}
\end{figure}

\smallskip
In the next simulation we demonstrate the advantages of equipping the sign-determining LORD-CI procedure with a marginal CI that itself has early sign determination, as discussed in Section \ref{sec:localization}. 
Thus, we consider the same realization of the data as in the previous example, but now parameters are selected through the sign-determining LORD-CI procedure using the MQC interval of Section \ref{sec:localization} instead of the usual symmetric CI. 
This resulted in 144 selections, about 13\% more than the number of selections with the symmetric CI; furthermore, the set of parameters selected by the symmetric CI-equipped procedure is (always) a subset of the parameters selected by the MQC-equipped procedure. 
The FCP for the (sign-determining) LORD-CI intervals is still quite low (0.042) relative to the nominal level. 
When constructing conditional CIs for the selected parameters, again 50\% of the intervals cross zero, while the LORD-CI intervals are all sign-determining. 
Figure \ref{fig:simulation:lordci} displays adjusted (LORD-CI) marginal CIs along with conditional CIs for parameters selected by the sign-determining LORD-CI procedure using the symmetric CI (left panel) and the MQC CI (right panel). 
Figure \ref{fig:simulation:lordci:bytime} is a representation by time of the constructed LORD-CI intervals: top panel for the procedure using the symmetric CI, and bottom panel for that using the MQC CI. 
Each black circle represents a data point $X_i$, and a vertical segment is shown whenever selection occurs---green for a covering CI and red for a non-covering CI. 
There are 127 constructed CIs in the top panel and 144 in the bottom panel.

\begin{figure}[]
  \centering
    \includegraphics[width=.825\textwidth]{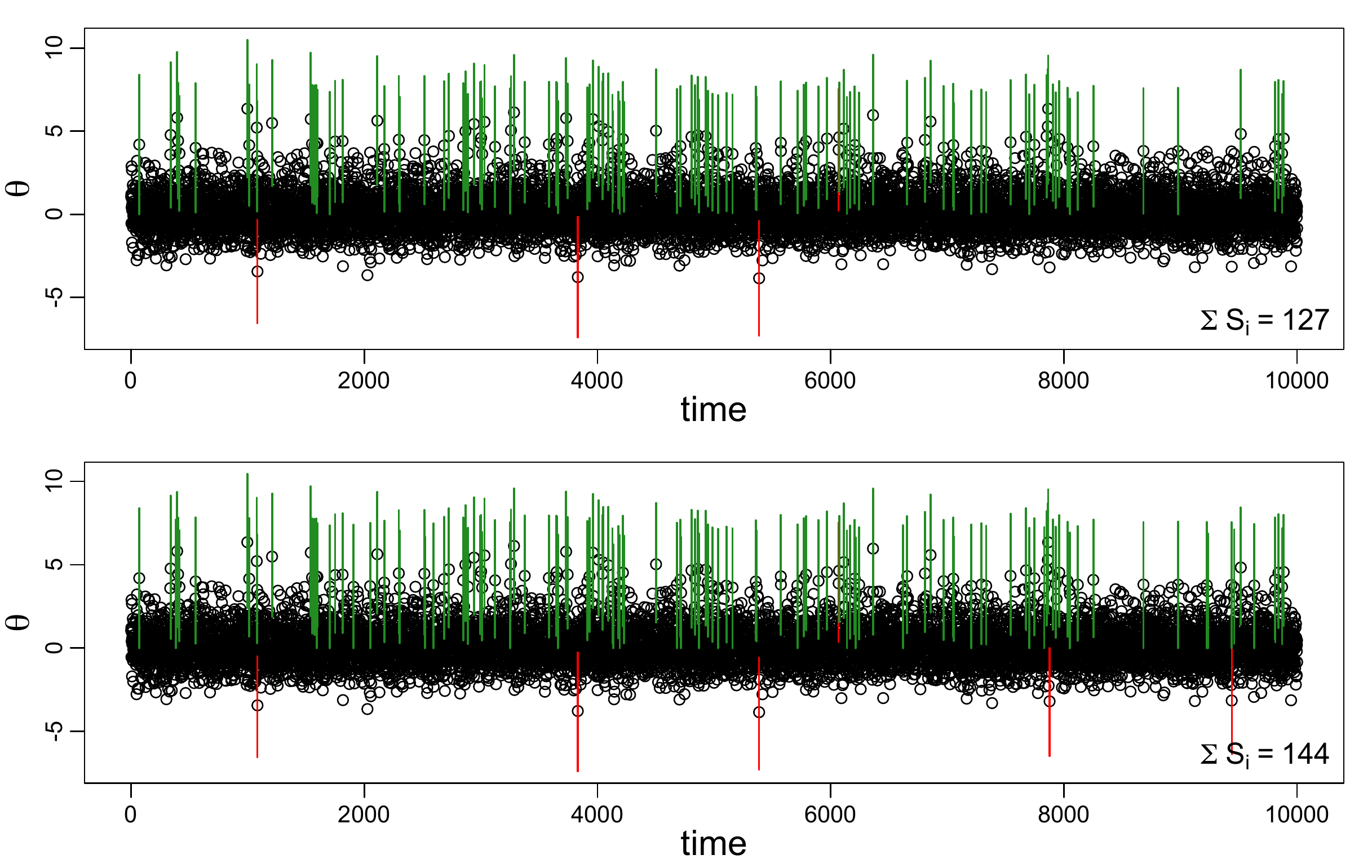}
  \caption{LORD-CI intervals for sign-determining LORD-CI procedure, representation by time. Top panel is for selection with the symmetric CI, bottom is for MQC. The horizontal axis is ``time" (order of appearance) and the vertical position of the circles is the $X_i$ values. When a CI is not constructed, only a circle is shown; for selected parameters, a CI is shown---green for a covering CI and red for a noncovering CI. The MQC-equipped procedure makes 17 more selections (an increase of 13\%). 
  }
  \label{fig:simulation:lordci:bytime}
\end{figure}

\begin{table}[]
    \begin{center}
\begin{tabular}{| *{9}{c|} }
    \hline
    & \multicolumn{2}{c|}{Fixed threshold}
            & \multicolumn{2}{c|}{Sgn-det LORD-CI (Symm)}
                            & \multicolumn{2}{c|}{Sgn-det LORD-CI (MQC)}                \\
    \hline \hline
   &   LORD-CI  &   cond.  &   LORD-CI  &   cond.  &   LORD-CI  &   cond.  \\ 
    \hline
\FCR   &  0.028   &   0.1  &   0.03  &  0.1   &   0.032   &  0.1   \\
    \hline
\mFCR   &  0.028   &   0.1  &  0.031   &  0.1   &   0.032   &  0.1   \\
    \hline
$\EE{\sum_i S_i}$   &  253.396   &  253.396   &  133.49   &  133.49   &   154.395   &   154.395  \\
    \hline
$\EE{(\sum_i S_i \cdot A_i)/\sum_i S_i}$   & 0.649 & 0.502   &  1   &  {0.533}   &  1   &  0.527  \\
\hline
\end{tabular}
    \end{center}
    \caption{Simulation summary for three selection schemes. 
    In the last row, $A_i$ is the indicator for the event that the CI is sign-determining. 
    The main and important advantage of using LORD-CI here, is that it is possible to guarantee that only sign-determining intervals are constructed. 
    A disadvantage of LORD-CI is that the FCR seems to be significantly smaller than the nominal level (compare to FCR for conditional), and intervals for ``large" observations seem to be excessively long.} 

    \label{tab:simulations}
\end{table}

\subsection{An inconsistency of conditional CIs} \label{subsec:inconsistency}

In realistic situations where our (online) model might be applicable, it is almost always the case that the researcher has in mind a question of primary importance and one (or more) of secondary importance. 
In the motivating example from the Introduction, the management might be interested first in knowing the sign of the parameters $\theta_i$ (say positive or nonpositive), but would also like to supplement with confidence limits each parameter whose sign was classified. 
In general, it is common practice to answer the question of primary interest by running a multiple comparisons procedure, for example a multiple hypothesis testing rule or, as would apply to our example, a multiple sign-classification rule. 
Because the follow-up question is posed only if the first question was answered (e.g., we want a CI only if we were able to classify the sign), the conditional approach might appear as natural to use at the second stage. 
Nevertheless, the purpose of this section is to demonstrate that constructing conditional CIs after running a multiple comparisons procedure might lead to contradictions. 
Moreover, if one insists on conditional CIs, the price of ``resolving" these incompatibilities might be a serious loss in power. 

\smallskip
Let us return to the motivating story of the Introduction, which we will now accompany with a simulation for illustration. 
Thus, we set $\alpha=0.1$ and draw $m=10,000$ parameters independently such that $\theta_i=(-1)^i\cdot 0.001$ (effectively ``null") with probability 0.8, and $\theta_i = 2$ with probability 0.2. 
These represent the ground truth for the treatment effects of the first $m$ drugs. 
The observations, which we assume arrive independently one at a time, are $X_i\sim N(\theta_i,1)$. 
In the Introduction a CI was reported once $X_i$ exceeded a fixed threshold. 
Suppose now that the statisticians are interested first in classifying the sign of a parameter as positive (``treatment effective") or nonpositive (``treatment ineffective"), and then follow up with CIs for those parameters whose sign was classified. 
To answer the first question, and being aware of multiplicity issues, the team decides to run the LORD++ testing procedure on two-sided $p$-values, where for each rejection they classify the sign as positive or nonpositive according as $X_i$ is positive or nonpositive. 
This resulted in $76$ selections in our simulation run, and makes sense as a criterion for whether to report an interval or not, because we know from the results of the current paper that the FSR  is controlled. 
Furthermore, remember that, as shown in Subsection \ref{subsec:sd:lord:ci}, constructing the LORD-CI symmetric interval for each selected parameter, ensures at the same time control of the FCR and that none of the constructed CIs include values of opposite signs. 
This is an output the management will, arguably, be content with seeing, at least in the sense that each reported CI is conclusive about the direction of the effect of the corresponding drug  (because the intervals do not cross zero). 

\smallskip
Instead, suppose that the statisticians will actually construct a $90\%$ conditional CI for each selected parameter. 
Now, because we use conditional CIs, it is impossible to ensure that a constructed interval includes values of only one sign (that is, does not cross zero)---this is true no matter what choice we make for the conditional CI rule. 
Here we used the conditional CI of \citet{weinstein2013selection} which inverts shortest acceptance regions. 
The left panel of Figure \ref{fig:appendix} shows the 76 constructed $90\%$ conditional CIs. 
We know that this strategy controls the FCR (in our single realization of the experiment $14.5\%$ of the constructed intervals are non-covering), but, less conveniently, there are also 52 (about $68\%$!) of these that cross zero. 
Hence, the team of statisticians will first have to reconcile the fact that on the one hand, each selected parameter can be safely classified for sign (as far as FSR is controlled), and on the other hand some intervals still include both positive and negative values. 
In any case---even if this incompatibility is overlooked---the management should certainly complain about the CIs that cross zero (because these are ambiguous about the direction of the effect of the corresponding drug). 
Trying to rectify the situation, they may ask to remove all CIs that do cross zero; unfortunately, doing this they would generally lose FCR control. 
The middle panel of Figure \ref{fig:appendix} shows the subset of (original) conditional CIs which do not cross zero; almost half of these ($45.8\%$) do not cover their parameter. 
Nevertheless, the statisticians might propose at this stage to still keep only the CIs that do not cross zero, but re-adjust them for the fact that further selection took place, by constructing again conditional CIs with an appropriate cutoff. 
This will admittedly restore FCR control: the right panel of Figure \ref{fig:appendix} shows the re-adjusted CIs with dashed lines, and the proportion of such intervals that fail to cover their parameter drops again to 0.125. 
The problem is that some of the re-adjusted CI cross zero again (in fact, a much higher proportion than that in the first place), taking us back to the previous stage. 
If we were now to repeat the process by discarding the new $16$ re-adjusted CIs that cross zero, we would be left with only $8$ selections before even adjusting the CIs again. 
In other words, we are already down from the 76 sign-determining LORC-CI intervals to no more than 8 if we use conditional CIs. 
In general, this cycle could continue until there are very few parameters to report a CI for (maybe none). 
We should remark at this point that using a conditional CI that has better sign-determining properties, like the two options in \citet{weinstein2013selection}, could improve the results for the conditional approach, that is, we might end up with more reported CIs. 
However, as we remarked before, the phenomenon in its essence remains regardless of the choice of the conditional CI.

\begin{figure}[h!]
  \centering
    \includegraphics[width=\textwidth]{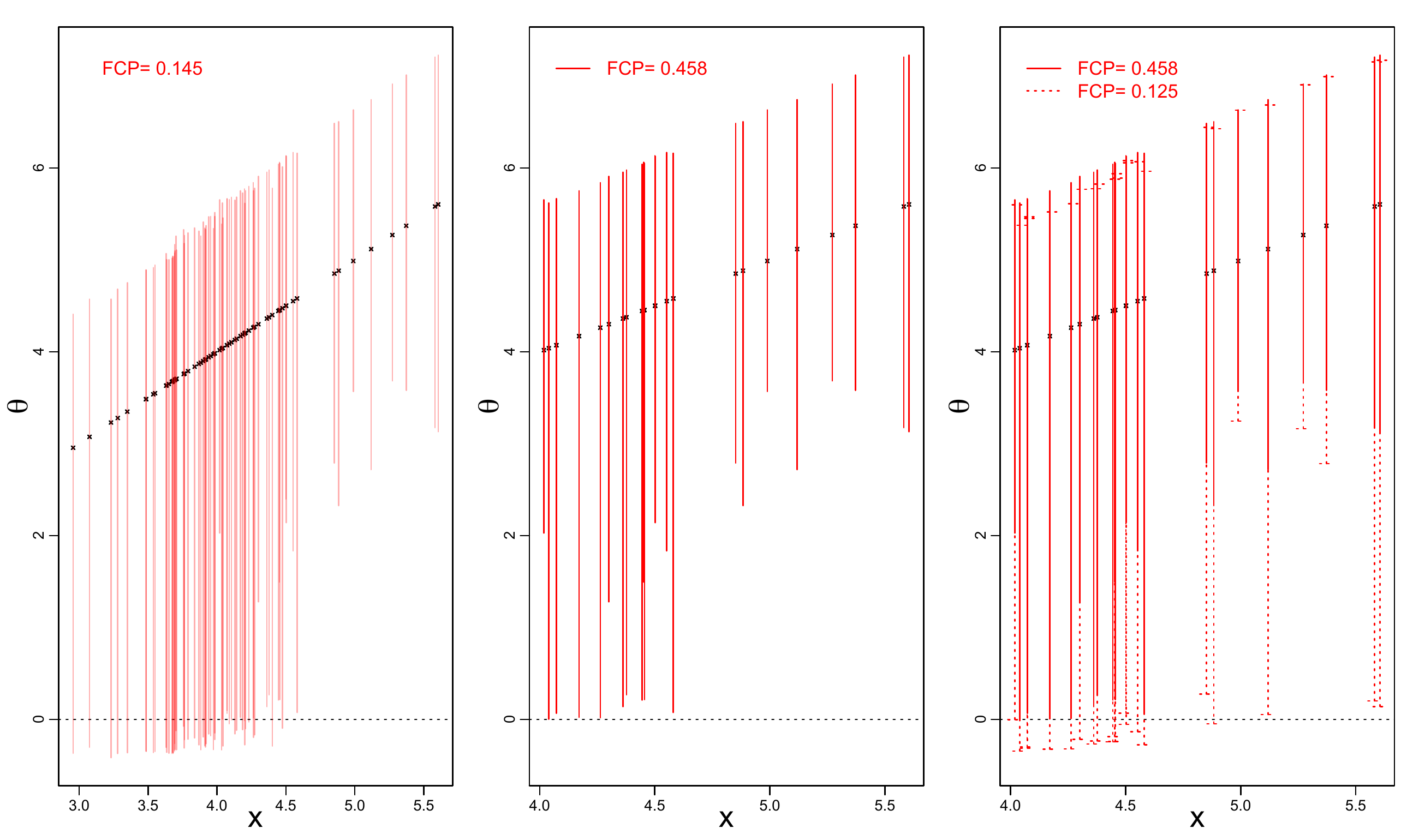}
  \caption{Conditional CIs following LORD++ selection.
  The left panel shows the 76 conditional $90\%$ CIs originally constructed for rejected nulls. 
  In the middle panel we kept only the 24 intervals that do not cover zero. 
  The right panel shows these 24 intervals again (solid lines), but now along with their re-adjusted version (dashed lines). 
  $2/3$ of the re-adjusted conditional CIs again cross zero.} 
  \label{fig:appendix}
\end{figure}

\section{Discussion}\label{sec:discussion}
In a multiplicity problem, there is a sequence of unknown parameters $\theta_i, \ i\leq m\in \N\cup \{\infty\}$, considered simultaneously, and for each we observe data $X_i$ that is informative about $\theta_i$. 
The main task is usually to use the random observations to localize ``interesting" parameters in the sequence, while controlling in some specified sense the rate of erroneous localizations. 
When the multiplicity problem arises in realistic situations, it is often the case that:
\begin{enumerate}
\item The $X_i$ arrive online, and we have to make decisions on the fly. 
\item We are concerned with more complicated problems than testing a sequence of point null hypotheses. 
For example, if $\theta_i$ is the percent difference of mean treatment effects between the $i$-th drug and a market standard, then we might want to know which $\theta_i$ are bigger than 10 and which are smaller than $-5$, as opposed to simply testing whether $\theta_i>0$.
\item There is a follow-up question to the primary one, e.g., if a drug can be declared to improve by at least 10\%, it would be nice to know also how large the improvement is at most.
\end{enumerate}
We offer methodology to address all of the concerns mentioned in the list above. 
The basic tool is an online algorithm for adjusting the levels of {\it selective} marginal confidence intervals so that we have control over the false coverage rate for any predictable selection rule. 
We show how to instantiate our online CI procedure to provide answers to items 2 and 3 in the list above. 
Specifically, adapted to the example above, we would offer an online procedure that reports only CIs that are either contained in $(10,\infty)$ or contained in $(-\infty,-5)$, while ensuring that $\FCR\leq \alpha$ at any point in time. 
The power of the online FCR procedure as a localization rule, and the shape/length of the reported CIs, is determined by the choice of the underlying marginal CI rule being used. The CI rules are allowed to be arbitrary in this paper, and we offered some insights on their interplay with FSR control in Section~\ref{subsec:FSR} and Section~\ref{subsec:simulations}. Of course, while the real-valued case is of particular interest throughout the paper, the methodology itself applies to any observation space $\mathcal{X}$ and any parameter space $\Theta$. 

\smallskip
We note that other than LORD++, we do not know if it is possible to find FCR analogs of other online FDR procedures like generalized alpha-investing by \cite{aharoni2014generalized}. We conjecture that existing online FDR procedures do not have natural FCR analogs, especially not the recent adaptive algorithms like SAFFRON \citep{ramdas2018saffron}. 
Specifically, other than LORD-CI, we were not able to construct any other online FCR procedures. 

\medskip
We end this article with an important remark. Keeping with historical treatment of multiple testing problems, it felt natural to write this article in terms of parameters and confidence intervals, but our entire methodology for FCR control goes through seamlessly for \emph{prediction intervals} as well. For instance, consider a regression problem where we are given a training dataset $(X^{\text{tr}}_j,Y^{\text{tr}}_j) \sim P_X \times P_{Y|X}$ and a sequence of test points $X_i \sim P_X$ at which we may want to make predictions. We may treat the unknown $Y_i$ in the same way as we treated the unknown $\theta_i$ in this paper. On observing $X_i$, we may deicde whether we wish to report a prediction interval $I_i$ for $Y_i$ or not (and this decision $S_i$ could be based on the previous selection decisions $\F^{i-1}$ and on $I_i$). 
As long as for each $i$, we can construct marginally valid intervals in the sense of definition \eqref{def:marg-CI}, that is, 
\[
\PPst{Y_i \in \I_i(X_i,\alpha_i)}{\F^{i-1}} \leq \alpha_i,
\]
then the  FCR (or other variants) for the selected prediction intervals will be controlled in exactly the same fashion as we have proven for selected confidence intervals. This insight has particularly important ramifications for \emph{conformal prediction}, which is a way of constructing assumption-free marginal predictive intervals \citep{vovk2005algorithmic,shafer2008tutorial}. Indeed, it is also well known that assumption-free conditional inference is impossible \citep{vovk2012conditional,lei2014distribution,barber2019limits}. However, our constructions allow for assumption-free selective inference, which is a reasonable middle ground between fully marginal inference (standard conformal) and fully conditional inference (impossible). To avoid introducing an entirely new problem in the discussion, we include a few more details for the interested reader in Section~\ref{sec:conformal} of the Appendix.

\bibliography{FDR}
\bibliographystyle{agsm}

\appendix

\section{Extensions}

In this section, we discuss a few extensions that could be useful in practice. Specifically these are (a) post-hoc FCR guarantees, (b) decaying-memory FCR, (c) asynchronous FCR control, and (d) conformal prediction. The details are fairly straightforward and hence are left to the reader.

\subsection{Post-hoc FCR control}

We briefly consider the scenario when the coverage levels $\alpha_i$ were set in an arbitrary predictable fashion (possibly based on an online FCR algorithm, but possibly not). It is immediate from the results of \cite{katsevich2018towards} that one can construct post-hoc upper bounds on the achieved FCP that are uniformly valid over time with high probability. 

\begin{theorem}
Let $\alpha_1,\alpha_2,\dots$ be any sequence of predictable coverage levels, and $S_i$ be any sequence of selection decisions. Then for any constant $a > 0$ and confidence level $\delta \in (0,1)$, we have
\[
\PP{\forall n \geq 1, \FCP(n) \leq \frac{a+\sum_{i=1}^n \alpha_i}{\sum_{i=1}^n S_i} \cdot \frac{\log(1/\delta)}{a \log(1+\tfrac{\log(1/\delta)}{a})}  } \geq 1-\delta.
\]
\end{theorem}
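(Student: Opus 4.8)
The plan is to recognize the claim as a direct instance of the uniform post-hoc bounds of \citet{katsevich2018towards}, obtained by running their exponential-supermartingale argument on the false-coverage process $\sum_{i\le n} V_i$ in place of the false-discovery process. The first and most important step is to establish the one-step drift bound
\[
\EEst{V_i}{\G_{i-1}} \le \alpha_i, \qquad \G_{i-1} := \sigma(X_1,\dots,X_{i-1}),
\]
which holds because $V_i = S_i\one{\theta_i\notin I_i} \le \one{\theta_i\notin I_i}$, because $\alpha_i,\S_i,\I_i$ are predictable (hence $\G_{i-1}$-measurable), and because $X_i$ is independent of the past; the marginal-CI property \eqref{def:marg-CI} then yields $\EEst{\one{\theta_i\notin I_i}}{\G_{i-1}} \le \alpha_i$ when the realized predictable level $\alpha_i$ is substituted for $a$.

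Given this drift bound, the second step is to construct, for a free parameter $\lambda \ge 0$, the process
\[
M_n := \exp\!\left(\lambda \sum_{i\le n} V_i - (e^{\lambda}-1)\sum_{i\le n}\alpha_i\right).
\]
Since $V_i\in\{0,1\}$, we have $e^{\lambda V_i} = 1 + (e^\lambda-1)V_i$, so $\EEst{e^{\lambda V_i}}{\G_{i-1}} \le 1 + (e^\lambda-1)\alpha_i \le \exp\!\left((e^\lambda-1)\alpha_i\right)$, which shows that $(M_n)$ is a nonnegative supermartingale with $M_0=1$ (the inequality $1+x\le e^x$ is precisely where the compensator $\sum_i\alpha_i$ enters). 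Ville's maximal inequality then gives $\PP{\exists n: M_n \ge 1/\delta} \le \delta$, i.e.\ with probability at least $1-\delta$, simultaneously over all $n$,
\[
\sum_{i\le n} V_i \le \frac{(e^\lambda-1)\sum_{i\le n}\alpha_i + \log(1/\delta)}{\lambda}.
\]

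The final step is purely algebraic: choosing $\lambda = \log\!\left(1 + \tfrac{\log(1/\delta)}{a}\right)$ makes $e^\lambda - 1 = \log(1/\delta)/a$, so the right-hand side collapses to $\left(a+\sum_{i\le n}\alpha_i\right)\cdot \tfrac{\log(1/\delta)}{a\log(1+\log(1/\delta)/a)}$; dividing through by $\sum_{i\le n}S_i$ and recalling $\FCP(n)=\sum_{i\le n}V_i\,/\,\sum_{i\le n}S_i$ yields the stated bound. I expect the only genuine obstacle to be the first step, namely justifying the drift bound with respect to a filtration rich enough to make $M_n$ adapted (here $\G_{i-1}$, which contains $\F^{i-1}$): one must invoke the independence of $X_i$ from the past to transfer the marginal-CI guarantee \eqref{def:marg-CI} from $\F^{i-1}$ to $\G_{i-1}$ and to accommodate the predictable, data-dependent level $\alpha_i$ via a freezing argument. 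Everything downstream is the standard exponential-supermartingale and Ville recipe together with the one-line optimization of $\lambda$.
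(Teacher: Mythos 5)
Your proposal is correct and is precisely the argument the paper relies on: the paper states the result as ``immediate'' from \citet{katsevich2018towards} without writing out a proof, and the underlying argument in that reference is exactly your recipe---the one-step drift bound $\EEst{V_i}{\G_{i-1}}\le\alpha_i$ via the freezing/independence argument, the exponential supermartingale $\exp(\lambda\sum_{i\le n}V_i-(e^\lambda-1)\sum_{i\le n}\alpha_i)$ built from $e^{\lambda V_i}=1+(e^\lambda-1)V_i$ and $1+x\le e^x$, Ville's inequality, and the choice $\lambda=\log(1+\log(1/\delta)/a)$. The algebra checks out, so you have in effect supplied the detailed proof the paper omits.
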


We note that while $\FCP(n)$ is unknown, the above upper bound can be easily calculated and tracked by the user. Due to the logarithmic scaling in $\delta$ and small{} explicit constants (for example using $a=1$), these bounds are not just theoretical constructs but are also practically meaningful.

\subsection{Decaying-memory FCR}

In an online setup such as the one considered in this paper, where the horizon is potentially infinite, one may posit that recent selections and miscoverage events should matter much more to the data scientist than older ones. The current definition of FCR treats all selections and miscoverage events equally, and hence may not be ideal. A natural extension of the proposal by \cite{RYWJ17} is to consider an alternate error metric called the decaying-memory FCR (memFCR) at time $T$, that is defined for any decay parameter $\delta \leq 1$ as
\[
\textnormal{memFCR}(T) = \EE{\frac{\sum_{i=1}^T \delta^{T-i} V_i }{\sum_{i=1}^T \delta^{T-i} S_i}}.
\]
Naturally, when $\delta=1$, we recover the FCR as a special case. Simple modifications of the LORD-CI algorithm considered in this paper, exactly along the lines of those proposed in \cite{RYWJ17}, immediately result in online memFCR control.

\subsection{Asynchronous FCR control, and handling local dependence}\label{sec:asynchronous}
The setting considered in this paper can be thought of as the ``fully synchronized'' setting, meaning that when the $i$-th selection decision is being made, one already has knowledge of all previous decisions. However, as extensively discussed in \cite{zrnic2018asynchronous}, large scale testing is often asynchronous and parallel. 

Indeed, to estimate $\theta_i$, one may often collect the data sequentially (eg: clinical trial, A/B test) which can be time-consuming. While this data-collection is in progress, one may begin to run other parallel experiments to estimate $\theta_{i+1}, \theta_{i+2}, \dots$. When experiment $i+1$ starts, the coverage level $\alpha_{i+1}$ needs to be specified at the start, because that often determines when to stop, and the selection decision $S_{i+1}$ is made at the end. In this setup, when a new experiment is started, one may only have selection decisions corresponding an arbitrary subset of the experiments that have begun earlier. Further, we note that the data of experiments running in parallel may have an arbitrary dependence, for example if the two experiments happen to share subjects.

Using the \emph{principle of pessimism} described in \cite{zrnic2018asynchronous}, one can derive variants of the LORD-CI algorithm that provably control the mFCR in an arbitrary asynchronous environment, even when the data from one experiment depends arbitrarily on data from temporally nearby experiments.

\section{Selective conformal inference}\label{sec:conformal}

Conformal prediction is a general nonparametric technique for producing marginally valid prediction intervals under almost no regularity assumptions on the data generating process beyond exchangeability of the data. A simple version of the setup can be explained as follows. Let $(X_1,Y_1),\dots,(X_n,Y_n)$ be drawn i.i.d. from some joint distribution $P_{XY} = P_X \times P_{Y|X}$, which are supported on the domain $\mathcal{X} \times \mathcal{Y}$, where for simplicity let $\mathcal{Y}=\R$. Given a test point $X_{n+1}$ drawn i.i.d. from $P_X$, our task is to provide a prediction interval for the unobserved $Y_{n+1}$.

Conformal prediction begins by hallucinating a value $y$, to form a new dataset $(X_1,Y_1),\dots,(X_n,Y_n),(X_{n+1},y)$. One may then train any regression algorithm $f:\mathcal{X} \to \mathcal{Y}$ on this set of $n+1$ points to obtain $\widehat f$, and calculate the  $n+1$ in-sample residuals $r_i = Y_i - \widehat{f}(X_i)$ for $i \in [n]$ and $r_{n+1} = y - \widehat{f}(X_{n+1})$. We then ``reject'' $y$ if $r_{n+1}$ is in the largest $\alpha$-quantile of all $n+1$ residuals. We then repeat this whole process for every possible $y \in \mathcal{Y}$. The final prediction interval $\I(X_{n+1},\alpha)$ consists of all those $y$s that we did not reject. The intuition is that when $y=Y_{n+1}$, then all $n+1$ residuals are exchangeable, and so the rank of $r_{n+1}$ among $r_1,\dots,r_{n+1}$ is uniform. Hence the probability of rejecting $y=Y_{n+1}$ and excluding it from the interval, equals the probability that $r_{n+1}$ is in the largest $\alpha$-quantile of $r_1,\dots,r_{n+1}$, which is at most $\alpha$. The formal guarantee is that $I_{n+1} := \mathcal{I}(X_{n+1},\alpha)$ is marginally valid:
\[
\PP{Y_{n+1} \notin \I(X_{n+1},\alpha)} \leq \alpha,
\]
where the probability is taken over all $(n+1)$ draws from $P_{XY}$. Remarkably, this guarantee holds with no assumptions on the distribution $P_{XY}$ or on the regression algorithm $f$ (these may affect the length of the intervals, but not their validity). However, a conditional conformal guarantee is in general impossible, meaning that if we do not make any distributional assumptions and we would like a guarantee of the form
\[
\PP{Y_{n+1} \notin \I(X_{n+1},\alpha) | X_{n+1}=x} \leq \alpha
\]
to hold for any $x$, then the corresponding conditional conformal interval $\I(X_{n+1},\alpha)$ must have infinite length. The impossibility of fully conditional conformal prediction was pointed out by \cite{vovk2012conditional}, elaborated further by \cite{lei2014distribution}, and recently explored by \cite{barber2019limits}.

The relationship of the above discussion to the current paper is as follows. There was nothing in particular that restricted the setup of the current paper to confidence intervals for parameters $\theta_i$ based on observations $X_i$. The setup just as easily covers prediction intervals for outcomes $Y_i$ based on features $X_i$. To understand the implications, suppose we were to observe a sequence $X_{n+1},\dots,X_{n+m},\dots$ of test points drawn i.i.d. from $P_X$, and we do not wish to cover all of the corresponding $Y$s but just some subset of them. Then, one may construct marginally valid prediction intervals (at predictable levels $\alpha_i$ using LORD-CI) for an adaptively selected subset of $X_i$s, and this paper provides an FCR control guarantee on those selected intervals. 

Hence, even though conditional conformal inference is impossible, our works implies that ``selective conformal inference'' is possible. There is no contradiction here: an FCR guarantee is weaker than a conditional guarantee. Also, the FCR guarantee cannot really be used to get a conditional guarantee; indeed, if one was to only select $X_i$ for coverage if it is in a very small $\epsilon$-ball around a given point $x$ (to approximate the conditional coverage guarantee), then such selections would be very infrequent, and the $\alpha_i$ used would be very close to zero, resulting in an exceedingly wide interval. As $\epsilon\to 0$, we would also see $\alpha_i \to 0$, and thus the length of the selected interval would become infinite.

We end this section by remarking that nothing was particular to conformal prediction intervals; the FCR guarantees would also apply to any other marginally-valid prediction intervals. 
Further, there was also nothing particular to the online setting; such FCR control can also be guaranteed in the offline setting, but the latter analysis is out of the scope of this paper.

\end{document}